\documentclass[a4paper,UKenglish,cleveref, autoref, thm-restate]{lipics-v2021}



\bibliographystyle{plainurl}

\usepackage{tikz}
\usetikzlibrary{automata,positioning,arrows}

\title{Random Wheeler Automata} 

\author{Ruben {Becker}}{Ca' Foscari University of Venice, Italy}{rubensimon.becker@unive.it}{https://orcid.org/0000-0002-3495-3753}{}
\author{Davide {Cenzato}}{Ca' Foscari University of Venice, Italy}{davide.cenzato@unive.it}{https://orcid.org/0000-0002-0098-3620}{}
\author{Sung-Hwan {Kim}}{Ca' Foscari University of Venice, Italy}{sunghwan.kim@unive.it}{https://orcid.org/0000-0002-1117-5020}{}
\author{Bojana {Kodric}}{Ca' Foscari University of Venice, Italy}{bojana.kodric@unive.it}{https://orcid.org/0000-0001-7242-0096}{}
\author{Riccardo {Maso}}{Ca' Foscari University of Venice, Italy}{riccardomaso27@gmail.com}{}{}
\author{Nicola {Prezza}}{Ca' Foscari University of Venice, Italy}{nicola.prezza@unive.it}{https://orcid.org/0000-0003-3553-4953}{}

\authorrunning{R. Becker et al.} 

\Copyright{Ruben Becker, Davide Cenzato, Sung-Hwan Kim, Bojana Kodric, Riccardo Maso and Nicola Prezza} 

\ccsdesc[500]{Theory of computation~Generating random combinatorial structures}
\ccsdesc[500]{Theory of computation~Sorting and searching}
\ccsdesc[500]{Theory of computation~Graph algorithms analysis}

\keywords{Wheeler automata, Burrows-Wheeler transform, random graphs} 

\category{} 

\relatedversion{} 



\newtheorem{problem}[theorem]{Problem}

\funding{\textit{Ruben Becker, Davide Cenzato, Sung-Hwan Kim, Bojana Kodric, Nicola Prezza}: Funded by the European Union (ERC, REGINDEX, 101039208). Views and opinions expressed are however those of the author(s) only and do not necessarily reflect those of the European Union or the European Research Council. Neither the European Union nor the granting authority can be held responsible for them.}





\DeclareMathOperator{\rank}{rank}

\DeclareMathOperator{\mask}{mask}

\let\epsilon\varepsilon
\let\eps\varepsilon

\usepackage{dsfont}



\usepackage{xspace}

\usepackage[linesnumbered,ruled,vlined,boxed]{algorithm2e}
\newlength{\commentWidth}
\setlength{\commentWidth}{7cm}

\let\oldnl\nl
\newcommand{\nonl}{\renewcommand{\nl}{\let\nl\oldnl}}
\DontPrintSemicolon
\SetKwInOut{Input}{Input}\SetKwInOut{Output}{Output}

\usepackage{tikz}
\usetikzlibrary{decorations.pathreplacing}
\usetikzlibrary{plotmarks}
\usetikzlibrary{positioning,automata,arrows}
\usetikzlibrary{shapes.geometric}
\usetikzlibrary{decorations.markings}
\usetikzlibrary{positioning, shapes, arrows}

\PassOptionsToPackage{usenames,dvipsnames,svgnames}{xcolor}
\definecolor{orange}{RGB}{235,90,0}
\definecolor{darkorange}{RGB}{175,30,0}
\definecolor{turkis}{RGB}{131,182,182}
\definecolor{darkturkis}{RGB}{31,82,82}
\definecolor{green}{RGB}{102,180,0}
\definecolor{darkgreen}{RGB}{51,90,0}
\definecolor{myblue}{RGB}{0,0,213}
\definecolor{mydarkblue}{RGB}{0,0,100}
\definecolor{mybrightblue}{HTML}{74B0E4}
\definecolor{mybrighterblue}{HTML}{B3EAFA}
\definecolor{lila}{RGB}{102,0,102}
\definecolor{darkred}{RGB}{139,0,0}
\definecolor{darkyellow}{RGB}{188,135,2}
\definecolor{brightgray}{RGB}{200,200,200}
\definecolor{darkgray}{RGB}{50,50,50}
\definecolor{amaranth}{rgb}{0.9, 0.17, 0.31}
\definecolor{alizarin}{rgb}{0.82, 0.1, 0.26}
\definecolor{amber}{rgb}{1.0, 0.75, 0.0}
\definecolor{green(ryb)}{rgb}{0.4, 0.69, 0.2}
\definecolor{hanblue}{rgb}{0.27, 0.42, 0.81}
\definecolor{grannysmithapple}{rgb}{0.66, 0.89, 0.63}

\newcommand{\ArxivOrCr}[2]{#1}

\ArxivOrCr{\EventEditors{John Q. Open and Joan R. Access}
\EventNoEds{2}
\EventLongTitle{42nd Conference on Very Important Topics (CVIT 2016)}
\EventShortTitle{CVIT 2016}
\EventAcronym{CVIT}
\EventYear{2016}
\EventDate{December 24--27, 2016}
\EventLocation{Little Whinging, United Kingdom}
\EventLogo{}
\SeriesVolume{42}
\ArticleNo{23}
}{
\EventEditors{Shunsuke Inenaga and Simon J. Puglisi}
\EventNoEds{2}
\EventLongTitle{35th Annual Symposium on Combinatorial Pattern Matching (CPM 2024)}
\EventShortTitle{CPM 2024}
\EventAcronym{CPM}
\EventYear{2024}
\EventDate{June 25--27, 2024}
\EventLocation{Fukuoka, Japan}
\EventLogo{}
\SeriesVolume{296}
\ArticleNo{22}
}

\begin{document}

\maketitle

\begin{abstract}
Wheeler automata were introduced in 2017 as a tool to generalize existing indexing and compression techniques based on the Burrows-Wheeler transform. 
Intuitively, an automaton is said to be Wheeler if there exists a total order on its states reflecting the natural co-lexicographic order of the strings labeling the automaton's paths; this property makes it possible to represent the automaton's topology in a constant number of bits per transition, as well as efficiently solving pattern matching queries on its accepted regular language. 
After their introduction, Wheeler automata have been the subject of a prolific line of research, both from the algorithmic and language-theoretic points of view. A recurring issue faced in these studies is the lack of large datasets of Wheeler automata on which the developed algorithms and theories could be tested. 
One possible way to overcome this issue is to generate random Wheeler automata. 
Motivated by this observation of practical nature, in this paper we initiate the theoretical study of random Wheeler automata, focusing our attention on the deterministic case (Wheeler DFAs --- WDFAs). We start by naturally extending the Erdős-R\'enyi random graph model to WDFAs, and proceed by providing an algorithm generating uniform WDFAs according to this model. 
Our algorithm generates a uniform WDFA with
$n$ states, $m$ transitions, and alphabet's cardinality $\sigma$
in $O(m)$ expected time ($O(m\log m)$ time w.h.p.) and constant working space for all alphabets of size $\sigma \le m/\ln m$. 
The output WDFA is streamed directly to the output.
As a by-product, we also give formulas for the number of distinct WDFAs and obtain that $ n\sigma + (n - \sigma) \log \sigma$ bits are necessary and sufficient to encode a WDFA with $n$ states and alphabet of size $\sigma$, up to an additive $\Theta(n)$ term.
We present an implementation of our algorithm and show that it is extremely fast in practice, with a throughput of over 8 million transitions per second.
\end{abstract}

\clearpage
\setcounter{page}{1}

\section{Introduction}

Wheeler automata were introduced by Gagie et al.\ in \cite{gagie:tcs17:wheeler} in an attempt to unify existing indexing and compression techniques based on the Burrows-Wheeler transform \cite{burrows1994block}. An automaton is said to be Wheeler if there exists a total order of its states such that (i) states reached by transitions bearing different labels are sorted according to the underlying total alphabet's order, and (ii) states reached by transitions bearing the same label are sorted according to their predecessors (i.e. the order propagates forward, following pairs of equally-labeled transitions). 
Equivalently, these axioms imply that states are sorted according to the co-lexicographic order of the strings labeling the automaton's paths. 
Since their introduction, Wheeler automata have been the subject of a prolific line of research, both from the algorithmic \cite{ConteMatchingStatistics2023,alanko2019tunneling,gagie2022representing,gibney2022complexity,egidi2022space,chao2022wgt,goga2022prefix} and language-theoretic \cite{ALANKO2021104820,AlankoRegular2020,DAgostinoOrdering2023} points of view. 
The reason for the success of Wheeler automata lies in the fact that their total state order enables \emph{simultaneously} to index the automaton for pattern matching queries and to represent the automaton's topology using just $O(1)$ bits per transition (as opposed to the general case, requiring a logarithmic number of bits per transition). 

A recurring issue faced in research works on Wheeler automata is the lack of datasets of (large) Wheeler automata on which the developed algorithms and theories could be tested. As customary in these cases, a viable solution to this issue is to randomly generate the desired combinatorial structure, following a suitable distribution. The most natural distribution, the uniform one, represents a good choice in several contexts and can be used as a starting point to shed light on the combinatorial objects under consideration; the case of random graphs generated using the Erdős-R\'enyi random graph model \cite{nicaud14:randomDFA} is an illuminating example. 
In the case of Wheeler automata, we are aware of only one work addressing their random generation: the WGT suite  \cite{chao2022wgt}. This random generator, however, does not guarantee a uniform distribution over the set of all Wheeler automata. 

\subsection{Our contributions}\label{sec:contributions}

Motivated by the lack of formal results in this area, in this paper we initiate the theoretical study of random Wheeler automata, 
focusing our attention on the algorithmic generation of uniform deterministic Wheeler DFAs (WDFAs). 
We start by extending the Erdős-R\'enyi random graph model 
to WDFAs: our uniform distribution is defined over the set $\mathcal D_{n, m, \sigma}$ of all Wheeler DFAs over the \emph{effective}  alphabet  (i.e. all labels appear on some edge) $[\sigma] = \{1,\dots, \sigma\}$, 
with $n$ states $[n]$, $m$ transitions, and Wheeler order $1 < 2 < \ldots < n$. 
We observe that, since any WDFA can be encoded using $O(n\sigma)$ bits \cite{gagie:tcs17:wheeler}, the cardinality of $\mathcal D_{n, m, \sigma}$ is at most $2^{O(n\sigma)}$. On the other hand, the number of DFAs with $n$ states over alphabet of size $\sigma$ is $2^{\Theta(n\sigma \log n)}$ \cite{nicaud14:randomDFA}. As a result, a simple rejection sampling strategy that uniformly generates DFAs until finding a WDFA (checking the Wheeler property takes linear time on DFAs \cite{AlankoRegular2020}) would take expected exponential time to terminate. 
To improve over this naive solution, we start by  defining a new combinatorial characterization of WDFAs: in Section \ref{sec:uniform WDFAs}, we establish a bijection that associates every element of $\mathcal D_{n, m, \sigma}$ to a pair formed by a binary matrix and a binary vector. 
This allows us to design an algorithm to uniformly sample WDFAs, based on the above-mentioned representation. 
Remarkably, our sampler uses \emph{constant} working space and streams the sampled WDFA directly to output:

\begin{theorem}
\label{th:main theorem}
There is an algorithm to generate a uniform WDFA from $\mathcal D_{n, m, \sigma}$ in $O(m)$ expected time ($O(m\log m)$ time with high probability) using $O(1)$ words of working space, for all alphabets of size $\sigma \le m/\ln m$. The output WDFA is directly streamed to the output as a set of labeled edges.
\end{theorem}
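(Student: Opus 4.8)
The plan is to convert the combinatorial correspondence of Section~\ref{sec:uniform WDFAs} into a sampler. By that correspondence $\mathcal D_{n,m,\sigma}$ is in bijection with a set of valid pairs $(M,B)$, where the binary matrix $M$ records which of the $n\sigma$ possible (state, label) outgoing transitions are present, and the binary vector $B$ encodes, together with $M$, the monotone surjective assignment of targets that the Wheeler axioms force on the equally-labeled transitions (and hence the partition of the states into the classes $S_1,\dots,S_\sigma$ sharing a common incoming label). Since this is a bijection, drawing a uniform WDFA reduces to drawing a uniform valid pair $(M,B)$, which I would do in two stages: first sample $M$, then sample $B$ conditioned on $M$. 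The point of this ordering is that, once $M$ fixes the number $|P_a|$ of sources of each label $a$, the admissible choices of $B$ are exactly the per-label monotone surjections onto the classes $S_a$, so the conditional distribution of $B$ is a plain uniform choice of subsets that is automatically consistent with $M$.

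Both stages reduce to the same primitive: emitting a uniform $m$-subset (resp.\ a uniform subset of prescribed size) of a totally ordered universe, \emph{in increasing order}, in $O(m)$ time and $O(1)$ words of space, without materialising the universe. I would implement this with a skip-based sequential sampler (in the style of Vitter's Method~D), which repeatedly draws the gap to the next selected element from the appropriate distribution. Running it over the $n\sigma$ cells of $M$ in column-major order produces, label by label, the sorted list of sources $P_a$; a second invocation, parametrised by the counts $|P_a|$, produces $B$ and thus the targets. Because the transitions are generated grouped by label and, within a label, in source order --- which by the Wheeler property coincides with target order --- each labeled edge $(u,a,v)$ can be computed and written to the output stream the moment it is determined, so the whole emission is a single $O(m)$-time, $O(1)$-space pass.

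The one real difficulty is the effective-alphabet requirement: a uniform $m$-subset of the cells of $M$ need not place a $1$ in every column, whereas a valid WDFA must use all $\sigma$ labels. I would handle this by rejection --- generate a candidate, test in a cheap $O(1)$-space scan that every label occurs (re-running the sampler from a stored random seed so that nothing needs to be buffered), and restart with fresh randomness otherwise. The crux of the analysis, and the step I expect to be the main obstacle, is to show that this rejection is cheap. Placing $m$ ones into $\sigma$ columns is a balls-into-bins process, and the hypothesis $\sigma \le m/\ln m$ is precisely the coupon-collector regime in which the expected number of empty columns is $o(1)$; hence by Markov's inequality a single trial succeeds with probability $1-o(1)$, in particular at least a positive constant. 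The number of trials is therefore stochastically dominated by a geometric random variable, giving $O(m)$ expected time, while a standard tail bound shows that $O(\log m)$ trials suffice with high probability, which yields the $O(m\log m)$ worst-case-with-high-probability bound. Uniformity is immediate, since conditioning a uniform distribution on the (all-labels-present) validity event is again uniform on the valid pairs, and the bijection carries this to the uniform distribution on $\mathcal D_{n,m,\sigma}$.
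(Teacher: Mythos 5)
Your proposal is correct and follows essentially the same route as the paper: the same bijection between WDFAs and (out-matrix, in-vector) pairs, the same two-stage sampling with rejection on empty columns, a sequential without-replacement sampler emitting positions in increasing order (the paper uses the ``hidden shuffle'', you propose Vitter's Method~D --- functionally equivalent here) to get $O(1)$ space and streamed output, and the same tail analysis of the rejection count (your Markov bound on the expected number of empty columns is the paper's union bound in disguise). The only step worth making explicit is that two-stage uniformity needs the number of admissible vectors $B$ given $M$ to be the \emph{same} for every valid $M$ --- it is, namely $\binom{m-\sigma}{n-\sigma-1}$, and the paper states this explicitly in its uniformity lemma.
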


As a by-product of our combinatorial characterization of WDFAs, in Theorem \ref{th: number of WDFAs} 
we give an exact formula for the number $|\mathcal D_{n, m, \sigma}|$ of distinct WDFAs with $n$ nodes and $m$ edges labeled from alphabet $[\sigma]$ and in Theorem \ref{thm:bound n sigma} we give a tight asymptotic formula for the number $|\mathcal D_{n, \sigma}|$ of distinct WDFAs with $n$ nodes and any number of edges labeled from $[\sigma]$, obtaining that $ n\sigma + (n - \sigma) \log \sigma$ bits are necessary and sufficient to encode WDFAs from such a family up to an additive $\Theta(n)$ term.

We conclude by presenting an implementation of our algorithm, publicly available at \url{https://github.com/regindex/Wheeler-DFA-generation}, and showing that it is very fast in practice while using a negligible (constant) amount of working space.

\section{Preliminaries and Problem Statement}
With $\ln x$ and $\log x$, we indicate the natural logarithm and the logarithm in base 2 of $x$, respectively. 
For an integer $k\in\mathbb{N}^{+}$, we let $[k]$ denote the set of all integers from 1 to $k$. For a bit-vector $v\in \{0,1\}^k$, we denote with $\|v\| = \sum_{i\in [k]} v_i$ the $L_1$-norm of $v$, i.e., the number of set bits in $v$. For an integer $\ell\le k$, we denote with $v[1:\ell]$ the bit-vector $(v_1, \ldots, v_\ell)$ consisting only of the first $\ell$ bits of $v$. 
For a bit-matrix $A\in \{0, 1\}^{\ell \times k}$ and a column index $j\in [k]$, we denote the $j$'th column of $A$ by $A_j$
and the element at row $i$ and column $j$ as $A_{i,j}$. We let $\|A\| = \sum_{i\in [k], j\in [\ell]} A_{i,j}$ be the $L_{1,1}$-norm of $A$, which again counts the number set bits in $A$.
For a bit-vector $v\in \{0,1\}^k$, we use the notation $\rank(v, i)$ to denote the number of occurrences of $1$ in $v[1:i]$. For completeness, we let $\rank(v, 0) = 0$. We generalize this function also to matrices as follows. For a bit-matrix $A\in \{0, 1\}^{\ell \times k}$, we let $\rank(A, (i, j))=\sum_{r\in [j-1]}\rank(A_r, \ell) + \rank(A_{j}, i)$. We sometimes write bit-vectors from $\{0,1\}^k$ in string form, i.e., as a sequence of $k$ bits.


In this paper we are concerned with deterministic finite automata.

\begin{definition}[Determinisitic Finite Automaton (DFA)]\label{def:DFA}
	A Determinisitic Finite (Semi-) Automaton (DFA) $D$ is a triple $(Q, \Sigma, \delta)$ where $Q=[n]$ is a finite set of $n$ states with $1\in Q$ being the source state, $\Sigma=[\sigma]$ is the finite alphabet of size $\sigma$, and $\delta:Q\times\Sigma\rightarrow Q$ is a transition function containing $m$ transitions.
\end{definition}

We omit to specify the final states of DFAs, since they do not play a role in the context of our problem. 
We use the shorthand $\delta_j(v)$ for $\delta(v, j)$. Furthermore, we write $\delta^{out}(v):=\{\delta_j(v): j\in \Sigma\}$ for the set of all out-neighbors of a state $v\in Q$ and $\delta^{in}(v):=\{u\in Q: \exists j\in \Sigma \text{ with }v\in \delta_j(u)\}$ for the set of all in-neighbors of $v$. We assume DFAs to have non-zero in-degree for exactly the non-source states, i.e., $\delta^{in}(v)\neq \emptyset$ if and only if $v>1$;
This choice simplifies our exposition and it is not restrictive from the point of view of the languages accepted by such DFAs.
We do not require the transition function $\delta$ to be complete; This choice is motivated by the fact that requiring completeness restricts the class of Wheeler DFAs \cite{ALANKO2021104820}. Furthermore, we do not require DFAs to be connected; Also this choice is customary as it allows, for instance, to use our WDFA sampler to empirically study properties such as connectivity phase transition thresholds. 

We say that the alphabet $\Sigma$ is \emph{effective} if and only if $(\forall j\in \Sigma)(\exists u,v\in Q)(\delta_j(u)=v)$, i.e. if every character of $\Sigma$ labels at least one transition. 
We assume that the alphabet $\Sigma=[\sigma]$ is totally ordered according to the standard order among integers. 
Wheeler DFAs constitute a special class of DFAs that can be stored compactly and indexed efficiently due to an underlying order on the states: the \emph{Wheeler order} (see Definition \ref{def:WDFA}). As said in Definition \ref{def:DFA}, in this paper the states $Q$ of an automaton $D$ are represented by the integer set $[n]$ for some positive integer $n$; note that in the following definition we use the order on integers $<$ to denote the Wheeler order on the states. 

\begin{definition}[Wheeler DFA~\cite{gagie:tcs17:wheeler}]\label{def:WDFA}
	A \emph{Wheeler DFA} (WDFA)
	is a DFA $D$ such that $<$ is a \emph{Wheeler order}, i.e. for $a,a'\in \Sigma$, $u,v,u',v'\in Q$: 
	\begin{enumerate}[(i)]
	    \item \label{wheeler 1} If $u' = \delta_a(u)$, $v' = \delta_{a'}(v)$, and $a \prec a'$, then $u' < v'$.
	    \item \label{wheeler 2} If $u' = \delta_a(u)\neq \delta_a(v) = v'$ and $u < v$, then $u' < v'$.
	\end{enumerate}
\end{definition}

We note that the source axiom present in \cite{gagie:tcs17:wheeler}, which requires that the source state is first in the order, vanishes in our case as the ordering $<$ on the integers directly implies that the source state is ordered first. Notice that property (\ref{wheeler 1}) in Definition~\ref{def:WDFA} implies that a WDFA is \emph{input-consistent}, i.e., all in-going transitions to a given state have the same label.

\begin{definition}\label{def:D n,m,sigma}
With $\mathcal D_{n, m, \sigma}$ we denote the set of all Wheeler DFAs with effective alphabet $\Sigma=[\sigma]$, $n$ states $Q=[n]$, $m$ transitions, and Wheeler order $1 < 2 < \ldots < n$.     
\end{definition}
 
Clearly, $\mathcal D_{n, m, \sigma}$ is a subset of the set $\mathcal A_{n, m, \sigma}$ of all finite (possibly non-deterministic) automata over the ordered alphabet $[\sigma]$ with $n$ states $[n]$ and $m$ transitions. 

In this paper we investigate the following algorithmic problem: 

\begin{problem}\label{main problem}
For given $n$, $m$, and $\sigma$, generate an element from $\mathcal D_{n, m, \sigma}$ uniformly at random. 
\end{problem}

Note that, since in Definition \ref{def:D n,m,sigma} we require $1 < 2 < \ldots < n$ to be the Wheeler order, Problem \ref{main problem} is equivalent to that of uniformly generating pairs formed by a Wheeler DFA $D$ and a valid Wheeler order for the states $Q=[n]$ of $D$, not necessarily equal to the integer order $1<2<\dots < n$.
Throughout the whole paper, we assume that $n-1 \le m \le n\sigma$ and $\sigma \le n - 1$ (due to input consistency), as otherwise $\mathcal D_{n, m, \sigma}=\emptyset$ and the problem is trivial.

\section{An Algorithm for Uniformly Generating WDFAs}\label{sec:uniform WDFAs}

Our strategy towards solving Problem \ref{main problem} efficiently is to associate every element $D$ from $\mathcal D_{n, m, \sigma}$ to exactly one pair $(O, I)$ of elements from  $\mathcal O_{n, \sigma, m} \times \mathcal I_{m, n}$ (see Definition \ref{def:I O} below) via a function $r: \mathcal D_{n, m, \sigma} \rightarrow \mathcal{O}_{n, \sigma, m} \times \mathcal{I}_{m, n}$ (``$r$'' stands for \emph{representation}). Formally, the two sets appearing in the co-domain of $r$ are given in the following definition.

\begin{definition}\label{def:I O}
Let
\begin{align*}
    &\mathcal{O}_{n, \sigma, m} 
    := \big\{ O \in \{0,1\}^{n\times \sigma} : \|O\|=m \text{ and } \|O_j\| \ge 1 \text{ for all } j\in [\sigma]\big\} \quad\text{ and }\\
    &\mathcal{I}_{m, n}
    := \big\{ I \in \{0,1\}^{m} : \|I\| = n - 1\big\}.
\end{align*}    
\end{definition}

The intuition behind the two sets $\mathcal{O}_{n, \sigma, m}$ and $\mathcal{I}_{m, n}$ is straightforward: their elements encode the outgoing labels and the in-degrees of the automaton's states, respectively. In order to describe more precisely this intuition, let us fix an automaton $D = (Q, \delta, \Sigma) \in \mathcal D_{n, m, \sigma}$ and consider its image $r(D) = (O, I)\in \mathcal{O}_{n, \sigma, m} \times \mathcal{I}_{m, n}$ (see Figures~\ref{fig:running ex D} and \ref{fig:running ex OI} for an illustration):
\begin{itemize}
    \item The matrix $O$ is an encoding of the labels of the out-transitions of $D$. A 1-bit in position $O_{u, j}$ means that there is an out-going transition from state $u$ labeled $j$. 
    Formally, 
    \begin{align}\label{formula: O}
        O_{u, j} := 
        \begin{cases}
            1 &\text{ if }\exists v: v=\delta_{j}(u) \\
            0 &\text{ otherwise.}
        \end{cases}
    \end{align}

    \item The vector $I$ is a concise encoding of the in-degrees of all states. It is defined as 
    \begin{align}\label{formula: I}
        I:=(
            \underbrace{1, 0, \ldots, 0}_{|\delta^{in}(2)|}, 
            \underbrace{1, 0, \ldots, 0}_{|\delta^{in}(3)|}, \ldots, 
            \underbrace{1, 0, \ldots, 0}_{|\delta^{in}(n)|}), 
    \end{align}
    i.e, for all states $i$ other than the source (that has no in-transitions), the vector contains exactly one 1-bit followed by $|\delta^{in}(i)| - 1$ 0-bits. 
\end{itemize}
\begin{figure}[ht!]
	\centering
	\begin{tikzpicture}[shorten >=1pt,node distance=1.4cm,on grid,auto]
	\tikzstyle{every state}=[fill={rgb:black,1;white,10}]
	
	\node[state] (1)                  {$1$};
	\node[state] (2)  [right of=1]    {$2$};
 	\node[state] (3)  [right of=2]    {$3$};
	\node[state] (4)  [right of=3]    {$4$};
	\node[state] (5)  [right of=4]    {$5$};

	\path[->]
	(1) edge [bend left = 38] node {2}    (3)
        (2) edge [loop below] node {1}    (2)
        (4) edge [loop below] node {2}    (4)
        (5) edge [loop above] node {2}    (5)
        (5) edge [bend left = 80, in = 120] node {1}    (2)
        (3) edge [bend left = 60] node {2}    (4);

	\end{tikzpicture}\caption{Running example: a WDFA $D$ with $n=5$ states, $m=6$ edges, alphabet cardinality $\sigma=2$, and Wheeler order $1<2<3<4<5$. Note that the WDFA has two connected components.}\label{fig:running ex D}
\end{figure}
\begin{figure}[ht!]
\centering
\begin{tabular}{cc}
    \begin{minipage}{.2\linewidth}
        \begin{tabularx}{50pt}{ccc}
              & 1 & 2\\
              \cline{2-3}
              1 & \multicolumn{1}{|c|}{0} & \multicolumn{1}{c|}{1} \\
              \cline{2-3}
              2 & \multicolumn{1}{|c|}{1} & \multicolumn{1}{c|}{0} \\
              \cline{2-3}
              3 & \multicolumn{1}{|c|}{0} & \multicolumn{1}{c|}{1} \\
              \cline{2-3}
              4 & \multicolumn{1}{|c|}{0} & \multicolumn{1}{c|}{1} \\
              \cline{2-3}
              5 & \multicolumn{1}{|c|}{1} & \multicolumn{1}{c|}{1} \\
              \cline{2-3}
        \end{tabularx}
    \end{minipage} &
    \begin{minipage}{.2\linewidth}
        \begin{tabularx}{100pt}{cccccc}
        2 & & 3 & 4 & & 5\\
        \cline{1-6}
             \multicolumn{1}{|c}{\textbf{1}} & \multicolumn{1}{|c}{0} & \multicolumn{1}{|c}{\textbf{1}} & \multicolumn{1}{|c}{1} & \multicolumn{1}{|c}{0} & \multicolumn{1}{|c|}{1} \\
        \cline{1-6}
        \end{tabularx}
    \end{minipage} 
\end{tabular}\caption{Matrix $O$ (left) and bit-vector $I$ (right) forming the encoding $r(D)=(O,I)$ of the WDFA $D$ of Figure \ref{fig:running ex D}. In matrix $O$, column names are characters from $\Sigma=[\sigma]$ and row names are states from $Q=[n]$. In bit-vector $I$, each state (except state 1) is associated with a bit set, in Wheeler order. Cells containing a set bit are named with the name of the corresponding state. Bits in bold highlight the states on which the character that labels the state's incoming transitions changes (i.e. state 2 is the first whose incoming transitions are labeled 1, and state 3 is the first whose incoming transitions are labeled 2).}\label{fig:running ex OI}
\end{figure}
Let us proceed with two remarks. 
\begin{remark}
  As $\|O\|=m$ there are $m$ transitions in total. As $\|O_j\| \ge 1$ for all $j\in [\sigma]$, the alphabet is effective, i.e., every character labels at least one transition. 
\end{remark}

\begin{remark}
 The vector $I$ does not encode the letter on which a transition is in-going to a given state. Notice however that as $D$ is a WDFA all these transitions have to be labeled with the same letter and we can reconstruct this letter for a given $I$ once we know the total number of transitions labeled with each letter. This is because property~(\ref{wheeler 1}) of Definition~\ref{def:WDFA} guarantees that the node order is such that the source state (that has no in-going transitions) is ordered first followed by nodes whose in-transitions are labeled with character $1$, followed by nodes with in-transitions labeled with character $2$, etc. The information on how many transitions are labeled with each character is carried by the matrix $O$ for which $r(D)=(O, I)$.
\end{remark}

Let $(O, I)$ be a pair from the image of $r$, i.e, $r(D)=(O, I)$ for some $D$. Then it will always be the case that $I$ is contained in a subset $\mathcal I_O$ of $\mathcal I_{m,n}$ that can be defined as follows.

\begin{definition}
    For a matrix $O\in \mathcal O_{n, \sigma, m}$, let 
    \[
            \mathcal I_O:= \Big\{ I\in \mathcal I_{m,n}: I_{1 + \sum_{k=1}^{j - 1} \|O_k\|} = 1 \text{ for all } j\in [\sigma] \Big\}.
    \]
\end{definition}

Using our running example of Figure \ref{fig:running ex OI}, the bits $I_{1 + \sum_{k=1}^{j - 1} \|O_k\|}$ that we force to be equal to 1 are those highlighted in bold, i.e. $I_1$ and $I_3$: 
noting that bits in $I$ correspond to edges, 
those bits correspond to the leftmost edge labeled with a given character $j$ (for any $j\in \Sigma$).

This leads us to define the following subset of $\mathcal O_{n, \sigma, m} \times \mathcal I_{m,n}$:

\begin{definition}\label{def:new signature}
    $
    \mathcal R_{n, m, \sigma} := \{(O, I): O\in \mathcal O_{n, \sigma, m}\text{ and }I\in\mathcal I_O\}.
    $
\end{definition}

Based on the above definition, we can prove:

\begin{lemma}
    For any $D \in \mathcal D_{n, m, \sigma}$, $r(D) \in \mathcal R_{n, m, \sigma}$.
\end{lemma}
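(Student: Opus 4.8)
The plan is to verify the two defining conditions of membership in $\mathcal R_{n, m, \sigma}$ separately: first that $O \in \mathcal O_{n, \sigma, m}$, and then that $I \in \mathcal I_O$ (which in particular requires $I \in \mathcal I_{m,n}$).

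For the first condition, I would argue directly from the defining formula \eqref{formula: O} for $O$. Each entry $O_{u,j}$ is $0$ or $1$ by construction, so $O \in \{0,1\}^{n\times\sigma}$. Since $D$ is deterministic, the pairs $(u,j)$ with $\delta_j(u)$ defined are in bijection with the transitions of $D$; hence the number of set bits $\|O\|$ equals the number $m$ of transitions. Finally, effectiveness of $\Sigma$ means every character $j$ labels some transition, which is exactly $\|O_j\| \ge 1$ for all $j\in[\sigma]$. Together these give $O \in \mathcal O_{n,\sigma,m}$.

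For $I \in \mathcal I_{m,n}$, I first note that input consistency (a consequence of axiom (\ref{wheeler 1})) forces all in-transitions of a state $i$ to carry the same label; combined with determinism, this makes the in-degree of $i$ equal to the number of in-neighbors $|\delta^{in}(i)|$, since two distinct transitions $\delta_j(u)=i=\delta_{j'}(u)$ from the same in-neighbor would need $j=j'$. Thus the blocks in \eqref{formula: I} have total length $\sum_{i=2}^n |\delta^{in}(i)| = m$ (every transition is counted once, at its unique target, and the source contributes nothing), so $I \in \{0,1\}^m$. Moreover there are exactly $n-1$ blocks, one per non-source state, and the assumption $\delta^{in}(i)\neq\emptyset$ for $i>1$ makes each block non-empty with precisely one set bit; hence $\|I\| = n-1$ and $I\in\mathcal I_{m,n}$.

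The main obstacle is the remaining constraint $I_{1+\sum_{k=1}^{j-1}\|O_k\|} = 1$ for all $j\in[\sigma]$. The key observation is that the order in which transitions index the bits of $I$ --- sorted by target state in the Wheeler order $1<2<\dots<n$ --- coincides with sorting the transitions by their label. Indeed, axiom (\ref{wheeler 1}) implies that a state whose in-transitions are labeled $a$ precedes every state whose in-transitions are labeled $a'$ whenever $a \prec a'$, so the states after the source split into contiguous blocks ordered by incoming label $1, 2, \dots, \sigma$, each block non-empty by effectiveness. Consequently the transitions targeting states with incoming label $<j$ occupy exactly the first $\sum_{k=1}^{j-1}\|O_k\|$ positions of $I$ (using that $\|O_k\|$ is the number of label-$k$ transitions, as shown above), and position $1+\sum_{k=1}^{j-1}\|O_k\|$ is the very first in-transition of the smallest state $s_j$ whose incoming label is $j$. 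Since that position begins a new target-state block, the corresponding bit of $I$ is set, establishing $I\in\mathcal I_O$ and hence $(O,I)\in\mathcal R_{n,m,\sigma}$.
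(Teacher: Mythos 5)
Your proof is correct and follows essentially the same route as the paper: the crucial step, showing $I_{1+\sum_{k=1}^{j-1}\|O_k\|}=1$, is argued exactly as in the paper via input consistency and the fact that axiom (i) groups non-source states into contiguous blocks ordered by incoming label. You additionally spell out the routine memberships $O\in\mathcal O_{n,\sigma,m}$ and $I\in\mathcal I_{m,n}$ (including why the block lengths sum to $m$), which the paper leaves implicit.
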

\begin{proof}
Note that the integers $\sum_{k=1}^{j - 1}\|O_k\|$ for $j\in [\sigma]$ correspond to the number of edges labeled with letters $1, \ldots, j - 1$, hence the positions $1 + \sum_{k=1}^{j - 1} \|O_k\|$ correspond to 
a change of letter in the sorted (by destination node) list of edges. Recalling that WDFAs are input-consistent (i.e., all in-transitions of a given node carry the same label) and that nodes are ordered by their in-transition letters, positions $1 + \sum_{k=1}^{j - 1} \|O_k\|$ for $j\in [\sigma]$ in $I$ must necessarily correspond to the first edge of a node, hence they must contain a set bit.
\end{proof}

The co-domain of the function $r$ can thus be restricted, and the function's signature can be redefined, as follows: $r:\mathcal D_{n, m, \sigma} \rightarrow \mathcal R_{n, m, \sigma}$.

After describing this association of a WDFA $D\in \mathcal D_{n, m, \sigma}$ to a (unique) pair $r(D)=(O, I) \in \mathcal R_{n, m, \sigma}$, 
we will argue that 
function $r$ is indeed a bijection from  $\mathcal D_{n, m, \sigma}$ to $\mathcal R_{n, m, \sigma}$. It will follow that one way of generating elements from $\mathcal D_{n, m, \sigma}$ is to generate elements from $\mathcal R_{n, m, \sigma}$: this will lead us to an efficient algorithm to uniformly sample WDFAs from $\mathcal D_{n, m, \sigma}$, as well to a formula for the cardinality of $\mathcal D_{n, m, \sigma}$.

\subsection{The Basic WDFA Sampler} \label{sec:algo}

Our overall approach is to (1) uniformly sample a matrix $O$ from $\mathcal O_{n, \sigma, m}$ using Algorithm \ref{alg: sample_O}, then (2) uniformly sample a vector $I$ from $\mathcal I_O$ using Algorithm \ref{alg: sample_I} with input $O$, and finally (3) build a WDFA $D$ using $O$ and $I$ as input via Algorithm \ref{alg: build_D}. 
We summarize this procedure in Algorithm \ref{alg: generate_D}.
A crucial point in our correctness analysis (Section \ref{sec:analysis}) will be to show that uniformly sampling from $\mathcal O_{n, \sigma, m}$ and $\mathcal I_O$ does indeed lead to a uniform WDFA from $\mathcal D_{n, m, \sigma}$ (besides the bijectivity of $r$, intuitively, this is because $|\mathcal I_O|=|\mathcal I_{O'}|$ for any $O,O'\in \mathcal O_{n, \sigma, m}$).

As source of randomness, our algorithm uses a black-box \emph{shuffler} algorithm: given a bit-vector $B\in\{0,1\}^*$, function $\mathtt{shuffle}(B)$ returns a random permutation of $B$.
To improve readability, in this subsection we start by describing a preliminary simplified version of our algorithm which does not assume any particular representation for the matrix-bit-vector pair $(O,I) \in \mathcal R_{n, m, \sigma}$, nor a particular shuffling algorithm (for now, we only require the shuffling algorithm to permute uniformly its input). 
By employing a particular \emph{sequential shuffler}, in Subsection \ref{sec:constant space} we then show that 
we can generate a sparse representation of $O$ and $I$ on-the-fly, thereby achieving \emph{constant} working space and linear expected running time.

\begin{algorithm}[!ht]
\caption{\texttt{sample\_D$(n, m , \sigma)$ }}\label{alg: generate_D}
    $O:= \texttt{sample\_O}(n, m , \sigma)$\;
    $I:= \texttt{sample\_I}(O)$\;
    $D:= \texttt{build\_D}(O, I)$\;
    \Return{$D$}
\end{algorithm}

\subparagraph{Out-transition Matrix.}
In order to sample the matrix $O$ from $\mathcal O_{n, \sigma, m}$, 
in addition to function $\texttt{shuffle}$ we assume
a function $\texttt{reshape}_{k, \ell}$ that takes a vector $x$ of dimension $k\cdot \ell$ and outputs a matrix $A$ of dimension $k\times \ell$ with the $j$'th column $A_j$ being the portion $x_{(j - 1)\cdot k + 1}, \ldots, x_{j\cdot k}$
of $x$. The algorithm to uniformly generate $O$ from $\mathcal O_{n, \sigma, m}$ then simply samples a bit vector of length $n\sigma$ with exactly $m$ 1-bits, shuffles it uniformly, reshapes it to be a matrix of dimension $n\times \sigma$ and repeats these steps until a matrix is found with at least one 1-bit in each column (rejection sampling).

\begin{algorithm}[!ht]
\caption{\texttt{sample\_O$(n, m , \sigma)$}}\label{alg: sample_O}
    \Repeat{
        $\|O_j\|\ge 1$ for all $j\in [\sigma]$
    }{
        $O:= \texttt{reshape}_{n, \sigma}(\texttt{shuffle}(1^m 0^{n\sigma - m}))$\;
    }
    \Return{$O$}
\end{algorithm}

Looking at the running example of Figures \ref{fig:running ex D} and \ref{fig:running ex OI}, the shuffler is called as $\texttt{shuffle}(1^60^4)$. In this particular example, this bit-sequence is permuted as 0100110111 by function $\texttt{shuffle}$. Function $\texttt{reshape}_{n, \sigma}$ converts this bit-sequence into the matrix $O$ depicted in Figure \ref{fig:running ex OI}, left. 

\subparagraph{In-transition Vector.}
In order to generate the vector $I$ from $\mathcal I_{m, n}$, we proceed as follows. The algorithm takes $O$ as input and generates a uniform random element from the set $\mathcal I_O$
by first creating a ``mask'' that is a vector of the correct length $m$ and contains $\sigma$ 1-bits at the points $1 + \sum_{k=1}^{j - 1} \|O_k\|$ for $j\in[\sigma]$. These are the points in $I$ where the character of the corresponding transition changes and hence, by the input-consistency condition, also the state has to change. The remaining $m - \sigma$ positions in the mask are filled with the wildcard character $\#$. We then give this mask vector as the first argument to a function \texttt{fill} that replaces the $m - \sigma$ positions that contain the wildcard character $\#$ with the characters in the second argument (in order). Formally, the function \texttt{fill} takes two vectors as arguments $a$ and $b$ with the condition that $a$ contains $|b|$ times the $\#$ character and $|a| - |b|$ times a 1-bit. The function then returns a vector $c$ that satisfies $c_i = 1$ whenever $a_i = 1$ and $c_i = b_{i - \rank(a, i)}$ otherwise, i.e., when $a_i=\#$. 
\begin{algorithm}[!ht]
\caption{\texttt{sample\_I$(O)$}}\label{alg: sample_I}
    extract $n, m, \sigma$ from $O$\;
    $\mask := 1\#^{\|O_1\| - 1}1\#^{\|O_2\| - 1}\ldots 1\#^{\|O_\sigma\| - 1} $\;
    $I := \texttt{fill}(\mask, \texttt{shuffle}(1^{n - \sigma - 1} 0^{m - n + 1}))$\;
    \Return{$I$}
\end{algorithm}

Going back to our running example of Figures \ref{fig:running ex D} and \ref{fig:running ex OI}, we have mask = \textbf{1}\#\textbf{1}\#\#\# (that is, all bits but the bold ones in the right part of Figure \ref{fig:running ex OI} are masked with a wildcard). The shuffler is called as \texttt{shuffle}(1100) and, in this particular example, returns the shuffled bit-vector 0101. Finally, function \texttt{fill} is called as \texttt{fill}(\textbf{1}\#\textbf{1}\#\#\#,0101) and returns the bit-vector $I=$101101 depicted in the right part of Figure \ref{fig:running ex OI}.

\subparagraph{Building the WDFA.}
After sampling $O$ and $I$, the remaining step is to build the output DFA $D$. This is formalized in Algorithm~\ref{alg: build_D}. By iterating over all non-zero elements in $O$, we construct the transition function $\delta$: the $i$'th non-zero entry in $O$ corresponds to an in-transition at state $\rank(I, i) + 1$ (we keep a counter named $v$ corresponding to this rank). The origin state of the transition is the row in which we find the $i$'th 1 in $O$ when reading $O$ column-wise. The column itself corresponds to the label of this transition.
\begin{algorithm}[!ht]
\caption{\texttt{build\_D$(O, I)$}}\label{alg: build_D}
    extract $n, m, \sigma$ from $O$, $Q:= [n]$, $\Sigma:=[\sigma]$\;
    \medskip
    
    $\delta := \emptyset$, 
    $i := 1$, $v := 1$\;
    
    \For{$j = 1,2,\dots,\sigma$}{
        \For{$u = 1,2, \dots, n$}{
            \If{$O_{u, j}=1$}{
                \lIf{$I_i=1$}{
                    $v := v + 1$
                }\label{line: if I}
                $\delta := \delta \cup \{((u, j), v))\}$, $i := i + 1$\;\label{line: add edge}
            }
        }
    }
    \Return{$D=(Q, \Sigma, \delta)$}
\end{algorithm}

\subsection{Constant-Space WDFA Sampler}\label{sec:constant space}

Notice that our Algorithm \ref{alg: build_D} accesses the matrix $O$ and the bit-vector $I$ in a sequential fashion: $O$ is accessed column-wise 
and $I$ from its first to last position. Based on this observation, we now show how our WDFA sampler can be modified to use \emph{constant} working space. The high-level idea is to generate on-the-fly the positions of non-zero entries of $O$ and $I$ in increasing order. 

In order to achieve this, we employ the 
\emph{sequential shuffler} described by Shekelyan and Cormode~\cite{hiddenShuffle}. Given two integers $N$ and $n$, the function $\mathtt{init\_sequential\_shuffler}(N, n)$ returns an iterator $S$ that can be used (with a stack-like interface) to extract $n$ uniform integers without replacement from $[N]$, in \emph{ascending order} and using a \emph{constant} number of words of working space (that is, the random integers are generated on-the-fly upon request, from the smallest to the largest). More specifically, function $S.\mathtt{pop}()$ returns the next sampled integer, while $S.\mathtt{empty}()$ returns true if and only if all $n$ integers have been extracted. The sequential shuffling algorithm is essentially a clever modification of Knuth's shuffle~\cite{Knuth98} (also referred to as Fisher-Yates shuffler). Knuth's shuffler, after going through the arbitrarily ordered set $[N]$, and in the $i$'th iteration (for $i$ from $1$ to $n$) swapping the $i$'th item with the item at a random position $[i, N]$, returns the first $n$ items in the resulting permutation. Knuth's shuffler requires working space proportional to $n$ as we need to remember which elements have been swapped from lower positions (i.e., index $\le n$) into higher positions (i.e., index $>n$). The idea behind the sequential shuffler of Shekelyan and Cormode is to first sample just the cardinality $H$ of the set of items in higher positions that Knuth’s shuffler would swap into lower position. Then, in a second step the algorithm samples $H$ actual items from higher positions with replacement, resulting in $h\le H$ elements. Finally, in a third step, $n - h$ items are sampled from lower positions. We note that the distribution in the first step is chosen such that the sampling in the second step can be done with replacement -- sampling duplicates simply increases the number of items sampled from lower positions. We refer the reader to the article by Shekelyan and Cormode~\cite{hiddenShuffle} for further details.


\begin{algorithm}[!ht]
\caption{\texttt{sample\_D\_constant\_space$(n,m,\sigma)$}}\label{alg: constant space}
    $i := 1$\tcc*[r]{\scriptsize Current position in the subsequence of \#'s of the mask}
    $v := 1$\tcc*[r]{\scriptsize Destination state of current transition} 

    \medskip

    $S_O := \mathtt{init\_sequential\_shuffler}(n\sigma, m)$\;\label{line: init SO}
    $S_I := \mathtt{init\_sequential\_shuffler}(m-\sigma, n-\sigma-1)$\;\label{line: init SI}

    \medskip
    
    $i' := S_I.\mathtt{pop}()$\tcc*[r]{\scriptsize next nonzero position in sequence of \#'s in the mask}
    $j := 0$\tcc*[r]{\scriptsize current column in $O$}
    $prev\_j := 0$\tcc*[r]{\scriptsize previously-visited column in $O$}
    
    \medskip

    \While{$\mathrm{not}\ S_O.\mathtt{empty}()$\label{line:while}}{

        $t := S_O.\mathtt{pop}()$\;\label{line:extract t}
        $(u, j) := \Big(\big((t-1)\mod n\big) + 1,\big((t-1) \  \mathrm{div}\ n\big) + 1\Big)$\tcc*[r]{\scriptsize Nonzero coordinate in $O$} \label{line: nonzero O}

        \If{$j > prev\_j+1$}{
        clear output stream and goto line 1\tcc*[r]{\scriptsize Rejection: $\|O_{prev\_j+1}\|=0$}\label{line: reject 1}
        }

        \eIf{$j = prev\_j+1$\tcc*[r]{\scriptsize Column of $O$ changes}\label{line: if col changes}}{
            $v := v + 1$\;\label{line: incr v, col changes}
            $prev\_j := j$\;
        }{

            \If{$i = i'$}{
                $v := v + 1$\;\label{line: incr v 2}
                $i' := S_I.\mathtt{pop}()$\tcc*[r]{\scriptsize next nonzero position in sequence of \#'s in the mask}\label{line: extract i'}
            }           
            $i := i + 1$\; \label{line:incr i}
        }
        \textbf{output} $((u, j), v)$\tcc*[r]{\scriptsize Stream transition to output}\label{line: output} 
    }
    \If{$j \neq \sigma$}{
        clear output stream and goto line 1\tcc*[r]{\scriptsize Rejection: $\|O_{\sigma}\|=0$\label{line:rejection 2}}
    }
\end{algorithm}

\subparagraph{Algorithm Description.}
We now describe Algorithm \ref{alg: constant space}. 
We recall the mask employed in Algorithm \ref{alg: sample_I}:  
Algorithm \ref{alg: constant space} 
iterates, using variable $i$, over the ranks (i.e., $i$-th occurrences) of characters \# (wildcards) in the mask. 
Variable $i'$, on the other hand, stores the rank of the next wildcard \# that is replaced with a set bit by the shuffler; the values of $i'$ are extracted from the shuffler $S_I$. 
Now, whenever $i=i'$, we are looking at a bit set in bit-vector $I$ (which here is not stored explicitly, unlike in Algorithm~\ref{alg: build_D}) and thus we have to move to the next destination state $v$. This procedure exactly simulates Lines~\ref{line: if I} and \ref{line: add edge} of Algorithm \ref{alg: build_D}. 

The iteration (column-wise) over all non-zero entries of matrix $O$ is simulated by the extraction of values from the shuffler $I_O$ (one value per iteration of the while loop at Line \ref{line:while}): each such value $t$ extracted at Line \ref{line:extract t} is converted to a pair $(u,j)$ at Line \ref{line: nonzero O}. Variables $j$ and $prev\_j$ store the columns of the current and previously-extracted non-zero entries of $O$, respectively. If $j>prev\_j+1$, then column number $prev\_j+1$ has been skipped by the shuffler, i.e., $O_{prev\_j+1}$ does not contain non-zero entries. In this case, we reject and start the sampler from scratch (Line~\ref{line: reject 1}; note that we need to clear the output stream before re-initializing the algorithm). If, on the other hand, $j=prev\_j+1$ (Line~\ref{line: if col changes}), then the current non-zero entry of $O$ belongs to the next column with respect to the previously-extracted non-zero entry; this means that the character labeling incoming transitions changes and we therefore move to the next destination node by increasing $v:=v+1$ (Line~\ref{line: incr v, col changes}). In this case we do not increment $i$, since the new destination node $v$ is the first having incoming label $j$ and thus it does not correspond to a character \# in the mask.
Variable $i$ gets incremented only if $j=prev\_j$: this happens at Line \ref{line:incr i}.
The other case in which we need to move to the next destination node ($v:=v+1$) is when $j = prev\_j$ and $i=i'$ (Line \ref{line: incr v 2}). In such a case, in addition to incrementing $v$ we also need to extract from the shuffler $S_I$ the rank $i'$ of the next mask character \# that is replaced with a set bit (Line \ref{line: extract i'}). After all these operations, we write the current transition $((u,j),v)$ to the output stream (Line \ref{line: output}). 
The last two lines of Algorithm \ref{alg: constant space} check if the last visited column of matrix $O$ is indeed $O_{\sigma}$. If not, $\|O_{\sigma}\|=0$ and we need to reject and re-start the algorithm.

The remaining components of Algorithm~\ref{alg: constant space} are devoted to simulate
Algorithm~\ref{alg: generate_D}, using as input the two sequences of random pairs/integers extracted from $S_O$ and $S_I$, respectively. As a matter of fact, the two loops in Algorithm~\ref{alg: build_D} correspond precisely to extracting the pairs $(u,j)$ from $S_O$, and the check at Line~\ref{line: if I} of Algorithm \ref{alg: build_D}, together with the increment of $i$ at Line~\ref{line: add edge}, corresponds to extracting the integers $i'$ from $S_I$. The rejection sampling mechanism (repeat-until loop in Algorithm~\ref{alg: sample_O}) is simulated in Algorithm~\ref{alg: constant space}
by re-starting the algorithm whenever the column $j$ of the current pair $(u,j)$ is either larger by more than one unit than the column $j\_prev$ of the previously-extracted pair (i.e., $\|O_{j\_prev+1}\|=0$, Line~\ref{line: reject 1}), or if the last pair extracted from $S_O$ is such that $j$ is not the $\sigma$-th column (i.e., $\|O_{\sigma}\|=0$, Line~\ref{line:rejection 2}).

\subparagraph{Running Example.}
To understand how the sequential shuffler is used in Algorithm \ref{alg: constant space}, refer again to the running example of Figures \ref{fig:running ex D} and \ref{fig:running ex OI}. In Algorithm \ref{alg: constant space} at Line \ref{line: init SO},
the sequential shuffler $S_O$ is initialized as $S_O := \mathtt{init\_sequential\_shuffler}(n\sigma = 10, m=6)$, i.e. the iterator $S_O$ returns 6 uniform integers without replacement from the set $\{1,2,\dots, 10\}$. In this particular example, 
function $\texttt{pop}()$ called on iterator $S_O$ returns the following integers, in this order: 2,5,6,8,9,10. Using the formula at Line \ref{line: nonzero O} of Algorithm \ref{alg: constant space}, these integers are converted to the matrix coordinates $(2,1), (5,1), (1,2), (3,2), (4,2), (5,2)$, i.e., precisely the nonzero coordinates of matrix $O$ in Figure \ref{fig:running ex OI}, sorted first by column and then by row. 

Using the same running example, the sequential shuffler $S_I$ is initialized in Line \ref{line: init SI} of Algorithm \ref{alg: constant space} as $S_I := \mathtt{init\_sequential\_shuffler}(m-\sigma = 4,n-\sigma - 1 = 2)$, i.e. the iterator $S_I$ returns two  uniform integers without replacement from the set $\{1,2,3,4\}$. In this particular example, 
function $\texttt{pop}()$ called on iterator $S_I$ returns the following integers, in this order: 2,4. Using the notation of the previous subsection, this sequence has the following interpretation: the 2-nd and 4-th occurrences of \# of our mask 1\#1\#\#\# used in Algorithm \ref{alg: sample_I} have to be replaced with a bit 1, while the others with a bit 0. After this replacement, the mask becomes 101101, i.e. precisely bit-vector $I$ of Figure \ref{fig:running ex OI}.

\section{Analysis}\label{sec:analysis}

\subsection{Correctness, Completeness and Uniformity} 

Being Algorithm \ref{alg: constant space} functionally equivalent to Algorithm \ref{alg: generate_D} (the only relevant difference between the two being the employed data structures to represent matrix $O$ and bit-vector $I$), for ease of explanation in this section we focus on analyzing the 
correctness (the algorithm generates only elements from $\mathcal D_{n, m, \sigma}$), completeness (any element from $\mathcal D_{n, m, \sigma}$ can be generated by the algorithm) and uniformity (all $D\in\mathcal D_{n, m, \sigma}$ have the same probability to be generated by the algorithm) of  Algorithm \ref{alg: generate_D}.
These properties then automatically hold on Algorithm \ref{alg: constant space} as well.

We start with a simple lemma. The lemma says the following: Assume that $r(D)=(O, I)$ and $O$ contains a 1 in position $(u,j)$, meaning that there is a transition leaving state $u$, labeled with letter $j$. Then this out-transition is the $i=\rank(O, (u, j))$ bit that is set to 1 in $O$ and hence the entry in $I$ corresponding to this transition can be found at $I[i]$. The state to which this transition is in-going is exactly the number of 1s in $I$ up to this point, i.e., $\rank(I, i)$, plus one (the offset is due to the source having no in-transitions).
\begin{lemma}
\label{lemma: O I relation}
    Let $D\in\mathcal D_{n, m, \sigma}$ and let $r(D)=(O, I)$. If $O_{u, j} = 1$ and $\rank(O, (u, j))=i$, then $\delta(u, j)=\rank(I, i) + 1$.
\end{lemma}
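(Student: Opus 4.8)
The plan is to unwind the definitions of the maps $r$, $O$, $I$, and the ranking functions, and verify that the claimed equality $\delta(u,j) = \rank(I,i) + 1$ follows by a counting argument on how the nodes are linearly ordered by their in-transitions. The key structural fact I would lean on is input-consistency together with the Wheeler property~(\ref{wheeler 1}): since nodes are ordered $1 < 2 < \dots < n$ with the source first, and every non-source node $v$ contributes exactly one $1$-bit to $I$ (the first of its $|\delta^{in}(v)|$ in-transitions, by formula~(\ref{formula: I})), the vector $I$ is a faithful run-length encoding of the partition of the $m$ transitions into groups, one group per destination node, listed in Wheeler order.

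First I would make precise the ordering of transitions induced by reading $O$ column-by-column. The quantity $i = \rank(O,(u,j)) = \sum_{r \in [j-1]} \rank(O_r, n) + \rank(O_j, u)$ is, by definition, the index of the set bit $O_{u,j}$ in the column-major enumeration of all $m$ set bits of $O$. So it suffices to argue that the $i$-th transition in this column-major order is exactly the transition whose destination node is $\rank(I,i)+1$. Here I would invoke Lemma~1 (that $r(D) \in \mathcal R_{n,m,\sigma}$) and its proof: the column-major order of the set bits of $O$ groups transitions first by label $j$ (all transitions labeled $1$, then all labeled $2$, and so on), and within a fixed label by origin state $u$. By Wheeler property~(\ref{wheeler 2}), within a fixed label the destinations are sorted by their predecessors, so this column-major order of transitions agrees, position by position, with the order in which destination nodes appear in $I$.

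Next I would formalize the correspondence between positions in $I$ and destination nodes. By~(\ref{formula: I}), the positions of the $1$-bits in $I$ are exactly the starting positions of the blocks $\delta^{in}(2), \delta^{in}(3), \dots, \delta^{in}(n)$; consequently, for any position $i \in [m]$, the value $\rank(I,i)$ counts how many node-blocks have begun at or before position $i$, and $\rank(I,i)+1$ is precisely the index (in Wheeler order) of the node whose in-transition block contains position $i$. Combining this with the previous step — that the $i$-th set bit of $O$ in column-major order is the transition sitting at position $i$ in the $I$-indexed list of in-transitions — yields that the destination of transition $(u,j)$ is $\rank(I,i)+1$, which is exactly $\delta(u,j)$.

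The main obstacle, and the step I would spend the most care on, is rigorously justifying that the two orderings coincide: namely that enumerating the set bits of $O$ in column-major order produces the transitions in exactly the same order as the blocks of $I$ encode the destination nodes. This is where the Wheeler axioms do the real work — property~(\ref{wheeler 1}) guarantees the primary sort by label matches the node partition by in-label, and property~(\ref{wheeler 2}) guarantees the secondary sort by origin within a label matches the forward propagation of the order to the destinations. Everything else is a routine bookkeeping check on the rank formulas, so I would state the order-agreement as the crux and then let the rank identities close the argument.
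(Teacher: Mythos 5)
Your proposal is correct and takes essentially the same route as the paper's proof: identify the $i$-th set bit of $O$ in column-major order with position $i$ of $I$, and read off the destination as $\rank(I,i)+1$ from the block structure of $I$. In fact your version is more complete than the paper's, which simply asserts that "this transition corresponds to the $i$-th entry in $I$" without spelling out, as you do, that the order agreement between the label-then-origin enumeration of $O$ and the destination-ordered blocks of $I$ is exactly what the two Wheeler axioms guarantee.
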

\begin{proof}
    First, notice that since $O_{u,j}=1$, it is clear that there is an outgoing transition from $u$ labeled $j$. Furthermore, since $\rank(O, (u, j))=i$, we know that this transition corresponds to the $i$-th entry in $I$. Now, by the definition of $I$, it follows that the destination state of the considered transition is $v=\rank(I,i)+1$.
\end{proof}

Algorithm~\ref{alg: build_D} is a deterministic algorithm and thus describes a function, say $f$, from the set of its possible inputs to the set of its possible outputs. The set of its possible inputs, i.e., the domain of $f$, is exactly $\mathcal R_{n, m, \sigma}$. The algorithm's output is certainly a finite automaton, i.e., the co-domain of $f$ is $\mathcal A_{n, m, \sigma}$. We will in fact show that the range of $f$ is exactly $\mathcal D_{n, m, \sigma}$. We will do so by showing that $f$ is actually an inverse of $r$, more precisely we show that (1) $r$ is surjective and (2) $f$ is a left-inverse of $r$ (and thus $r$ is injective).

\subparagraph{Surjectivity of $r$.} 
We start with proving that $r$ is surjective. 
\begin{lemma}
    It holds that $r:\mathcal D_{n, m, \sigma} \rightarrow \mathcal R_{n, m, \sigma}$ is surjective.
\end{lemma}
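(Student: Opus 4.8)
The plan is to show surjectivity of $r : \mathcal{D}_{n,m,\sigma} \to \mathcal{R}_{n,m,\sigma}$ by taking an arbitrary pair $(O,I) \in \mathcal{R}_{n,m,\sigma}$ and exhibiting a WDFA $D \in \mathcal{D}_{n,m,\sigma}$ with $r(D) = (O,I)$. The natural candidate for $D$ is exactly the automaton $f(O,I)$ produced by Algorithm~\ref{alg: build_D}, since the whole point of that algorithm is to invert $r$. So concretely, I would define $\delta$ via the rule of Lemma~\ref{lemma: O I relation}: for each $(u,j)$ with $O_{u,j}=1$, set $\delta(u,j) := \rank(I, i) + 1$ where $i = \rank(O,(u,j))$. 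The bulk of the proof is then verifying that this $D$ actually lands in $\mathcal{D}_{n,m,\sigma}$, i.e.\ that it is (a) a well-defined DFA with the correct parameters $n$, $m$, $\sigma$, effective alphabet, and the in-degree/source conventions, and (b) that $1 < 2 < \cdots < n$ is a genuine Wheeler order, and finally that $r(D)=(O,I)$.

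\textbf{First I would check the DFA parameters.} The transition function is well-defined because $\delta$ is built from distinct coordinates $(u,j)$ of $O$, so no $(u,j)$ receives two images; determinism is immediate. Since $\|O\|=m$ we get exactly $m$ transitions, and since $\|O_j\|\ge 1$ for every column the alphabet is effective. For the in-degree convention, I would argue that the number of transitions mapping into a state $v$ equals the length of the maximal run of consecutive $i$'s sharing the same value of $\rank(I,i)+1$; because $\|I\| = n-1$ and $I$ records one set bit per non-source state (by the very shape of $\mathcal{I}_{m,n}$), every state $v \in \{2,\dots,n\}$ receives at least one in-transition while the source $1$ receives none. This is where the constraint $I \in \mathcal{I}_O$ (forcing set bits at the letter-change positions) does real work: it guarantees that as $i$ ranges over a block of $O$ belonging to a fixed column $j$, the destination ranks are consistent with input-consistency, discussed next.

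\textbf{The Wheeler-order verification is the technical core, and I expect it to be the main obstacle.} For axiom~(\ref{wheeler 1}), I must show that if $u' = \delta_a(u)$ and $v' = \delta_{a'}(v)$ with $a < a'$ (i.e.\ $a \prec a'$), then $u' < v'$. Since $\delta(u,a) = \rank(I,i)+1$ with $i = \rank(O,(u,a))$, and the $\rank$ on $O$ traverses columns left-to-right, every transition with label $a$ occupies an earlier index $i$ than every transition with label $a' > a$; as $\rank(I,\cdot)$ is monotone non-decreasing, this yields $u' \le v'$, and the placement of a forced set bit in $I$ at the position $1+\sum_{k<a'}\|O_k\|$ (the definition of $\mathcal{I}_O$) guarantees the rank strictly increases at the letter boundary, giving $u' < v'$. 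For axiom~(\ref{wheeler 2}), I would argue that when $a = a'$ but $u < v$, the coordinate $(u,a)$ precedes $(v,a)$ in column-then-row order, so its $O$-rank $i$ is smaller, hence $\rank(I,i) \le \rank(I,i')$ and thus $u' \le v'$; strictness $u' \ne v'$ is given by hypothesis, so $u' < v'$. These monotonicity arguments are routine once set up, but getting the rank bookkeeping and the strict-versus-weak inequalities exactly right, and correctly invoking the $\mathcal{I}_O$ constraint at each letter change, is the delicate part.

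\textbf{Finally I would close the loop} by verifying $r(D) = (O,I)$: by Formula~(\ref{formula: O}), the out-label matrix of $D$ has a $1$ at $(u,j)$ precisely when $\delta(u,j)$ is defined, which by construction is exactly where $O_{u,j}=1$, so the $O$-component matches; and by Formula~(\ref{formula: I}), the in-degree encoding of $D$ reproduces $I$ because the destinations were defined through $\rank(I,\cdot)$ and the established in-degree count shows the run-lengths agree. Hence $r(D)=(O,I)$ and $r$ is surjective. I would remark that this argument simultaneously previews the claim that $f$ is a left-inverse of $r$, since the witness $D$ is nothing but $f(O,I)$.
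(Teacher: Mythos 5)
Your proposal is correct and follows the same construction as the paper: given $(O,I)\in\mathcal R_{n,m,\sigma}$, define $\delta(u,j)=\rank(I,\rank(O,(u,j)))+1$ for each $(u,j)$ with $O_{u,j}=1$, and then verify $r(D)=(O,I)$. The one substantive difference is that you additionally check that the constructed automaton really lies in $\mathcal D_{n,m,\sigma}$ — well-definedness, effective alphabet, the in-degree/source conventions, and both Wheeler axioms, with the forced set bits of $\mathcal I_O$ at the letter boundaries supplying the strict inequality for axiom (i) — whereas the paper's proof only establishes $r(D)=(O,I)$ and leaves the Wheeler property of $D$ implicit, even though membership of $D$ in the domain of $r$ is needed for that equation to make sense. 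Your rank-monotonicity arguments for the two axioms are sound, so your version is, if anything, slightly more complete than the paper's.
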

\begin{proof}
    Fix an element $(O, I)\in \mathcal R_{n, m, \sigma}$, i.e., an $O\in \mathcal O_{n, \sigma, m}$ and $I\in\mathcal I_O$. We now construct an automaton $D=(Q, \Sigma, \delta)$ and then show that $r(D)=(O, I)$. We let $Q=[n]$, $\Sigma=[\sigma]$, and
    \[
        \delta = \{((u, j), v): O_{u, j} = 1 \text{ and } v = \rank(I, \rank(O, (u, j))) + 1\}.
    \]
    Let $r(D)=(O', I')$ and let us proceed by showing that $O=O'$ and $I=I'$. Recall the definition of $r$, see Equations~\eqref{formula: O} and~\eqref{formula: I}. It is immediate that $O'=O$ given the definition of $O'$ and $\delta$. In order to show that $I'=I$, first note that $I'=\prod_{i=2}^n10^{|\delta^{in}(i)| - 1}$. Then, consider the following relation between $I$ and $\delta^{in}(i)$ for any state $i\in[n]$, which uses the definition of $\delta$ and Lemma~\ref{lemma: O I relation}:
    \begin{align*}
        |\delta^{in}(i)|
        &= |\{(u, j)\in [n]\times[\sigma] : O_{u,j} = 1 \text{ and } \rank(I, \rank(O, (u, j))) + 1 = i\}|\\
        &= |\{k\in [m] : k = \rank(O, (u, j)) \text{ for some }(u,j)\in [n]\times[\sigma] \text{ with }O_{u,j}=1\\ 
            &\hspace{8cm} \text{ and } \rank(I, k) + 1 = i\}|\\
        & = \max\{k\in [m] : \rank(I, k)=i-1\}
            - \max\{k\in [m] : \rank(I, k)=i-2\}.
    \end{align*}
    Now, recall that $I\in \mathcal{I}_O$, hence the first bit in $I$ is 1. Using the previous equality, it now follows that the second 1-bit in $I$ is at position $|\delta^{in}(2)| + 1$. By using this equality another $n-3$ times, we obtain that the first $\sum_{i=2}^{n - 1}|\delta^{in}(i)| + 1$ positions of $I$ are equal to $(10^{|\delta^{in}(2)| - 1}10^{|\delta^{in}(3)| - 1}\ldots 10^{|\delta^{in}(n-1)| - 1}) 1$ and thus agree with $I'$. It remains to observe that this portion of $I$ already contains $n-1$ bits that are equal to 1 and thus the remaining bits have to be zero-bits as $I\in \mathcal I$. Hence, $I = I'$ and this completes the proof.
\end{proof}

\subparagraph{Injectivity of $r$ via Left-inverse $f$.}
In order to establish that $f$ is the inverse of $r$, it remains to prove that $f$ is a left-inverse of $r$ (which implies that $r$ is injective).
\begin{lemma}
    The function $f$ is a left-inverse of $r$, i.e.,  $f(r(D))=D$ for any $D\in \mathcal D_{n, m, \sigma}$. 
\end{lemma}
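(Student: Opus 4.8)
The plan is to show that applying $f$ (i.e., Algorithm~\ref{alg: build_D}) to the pair $r(D)=(O,I)$ reconstructs exactly the transition function $\delta$ of the original WDFA $D$. Since $r(D)=(O,I)$ is well-defined and lies in $\mathcal R_{n,m,\sigma}$ (by the earlier lemma), and since $f$ is a deterministic function, it suffices to verify that the output automaton $f(O,I)$ has the same states $Q=[n]$, the same alphabet $\Sigma=[\sigma]$, and the same transition set as $D$. The states and alphabet agree trivially because Algorithm~\ref{alg: build_D} sets $Q:=[n]$ and $\Sigma:=[\sigma]$, both of which are recovered from the dimensions of $O$. Hence the entire content of the proof reduces to matching the transition functions.

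First I would trace through Algorithm~\ref{alg: build_D} to give a closed-form description of the transition it produces. The algorithm iterates over the nonzero entries of $O$ in column-major order; when it encounters the $i$-th set bit of $O$, located at position $(u,j)$, it creates the transition $((u,j),v)$ where $v$ equals one plus the number of set bits of $I$ among the first $i$ positions. In other words, $f(O,I)$ produces exactly the transition set
\[
    \{((u,j),v): O_{u,j}=1 \text{ and } v=\rank(I,\rank(O,(u,j)))+1\}.
\]
The cleanest way to see this is to observe that the counter $i$ in the algorithm equals $\rank(O,(u,j))$ when the entry $(u,j)$ is processed (since entries are visited in column-major order, which is precisely the order underlying the definition of $\rank$ on matrices), and that the counter $v$ is incremented exactly once for each set bit of $I$ scanned so far, so that after processing position $i$ the value of $v$ equals $\rank(I,i)+1$.

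The central step is then to invoke Lemma~\ref{lemma: O I relation}, which states precisely that for $D\in\mathcal D_{n,m,\sigma}$ with $r(D)=(O,I)$, whenever $O_{u,j}=1$ and $\rank(O,(u,j))=i$ we have $\delta(u,j)=\rank(I,i)+1$. This shows that every transition in the set produced by $f(O,I)$ coincides with a transition of $D$, and conversely, by the definition of $O$ in Equation~\eqref{formula: O}, every transition of $D$ corresponds to a set bit $O_{u,j}=1$ and is therefore produced by $f$. Combining these two directions yields $f(r(D))=D$. I expect the main obstacle to be purely bookkeeping: carefully arguing that the algorithm's running counters $i$ and $v$ coincide with the $\rank$ quantities in Lemma~\ref{lemma: O I relation} at the moment each transition is emitted, since this identification is what bridges the operational description of $f$ and the static characterization provided by the lemma. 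Once that correspondence is nailed down, the equality of transition sets is immediate.
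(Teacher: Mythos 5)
Your proposal is correct and follows essentially the same route as the paper: reduce to equality of transition functions, identify the algorithm's running counters $i$ and $v$ with $\rank(O,(u,j))$ and $\rank(I,i)+1$, and invoke Lemma~\ref{lemma: O I relation} to match each emitted transition with a transition of $D$. The only cosmetic difference is that the paper closes the argument by counting ($m$ emitted transitions, each contained in $\delta$, and $|\delta|=m$), whereas you argue both containments directly; either finishing step is fine.
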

\begin{proof}
    Let $D=(Q, \Sigma, \delta)\in \mathcal D_{n, m, \sigma}$ and let $(O, I) = r(D)$. We have to show that $D'=f(O, I)$, i.e., the automaton $D'=(Q', \Sigma', \delta')$ output by Algorithm~\ref{alg: build_D} on input $(n, m, \sigma, O, I)$ is equal to $D$. Notice that clearly $Q=Q'=[n]$ and $\Sigma = [\sigma]$. It remains to show that $\delta=\delta'$. It is clear that Algorithm~\ref{alg: build_D} adds $m$ transitions to $\delta'$, one in each of the $m=\|O\|$ iterations. It thus remains to prove that each such transition $((u, j), v)$ added in some iteration $i$ is contained in $\delta$. Firstly, as $O_{u, j}=1$ it is clear that $D$ has an outgoing transition at state $u$ with letter $j$, second it is clear that the algorithm maintains the property that $v=\rank(I, i) + 1$ and thus due to Lemma~\ref{lemma: O I relation} it holds that $\delta(u, j)=v$ and thus this transition is also contained in $\delta$.
\end{proof}

We can thus denote the function $f$ with $r^{-1}$.

\begin{corollary}\label{cor:r is bijective}
    Function $r:\mathcal D_{n, m, \sigma} \rightarrow \mathcal R_{n, m, \sigma}$ is bijective.
\end{corollary}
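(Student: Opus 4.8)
The plan is to assemble the corollary directly from the three preceding lemmas, since all the substantive work has already been done. The statement asks for bijectivity of $r : \mathcal D_{n, m, \sigma} \rightarrow \mathcal R_{n, m, \sigma}$, and bijectivity is exactly the conjunction of surjectivity and injectivity. The surjectivity half is already in hand verbatim from the surjectivity lemma. So the only thing to argue is injectivity, and for that I would invoke the left-inverse lemma, which gives a function $f$ (the function computed by Algorithm~\ref{alg: build_D}) satisfying $f(r(D)) = D$ for every $D \in \mathcal D_{n, m, \sigma}$.

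First I would recall the standard fact that possessing a left-inverse forces injectivity: if $r(D_1) = r(D_2)$ for two elements $D_1, D_2 \in \mathcal D_{n, m, \sigma}$, then applying $f$ to both sides and using the left-inverse property yields $D_1 = f(r(D_1)) = f(r(D_2)) = D_2$. Hence $r$ is injective. Combined with the surjectivity lemma, $r$ is a bijection. Since $f$ is then a left-inverse of a bijection, it coincides with the genuine two-sided inverse, which justifies the notation $r^{-1} := f$ introduced just before the corollary.

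I expect no real obstacle here: this corollary is a one-line logical consequence of results already proved, and the proof amounts to a single sentence citing the two relevant lemmas. The only thing worth a moment of care is making sure the codomain used throughout is the \emph{restricted} one $\mathcal R_{n, m, \sigma}$ (as fixed by the earlier lemma showing $r(D) \in \mathcal R_{n, m, \sigma}$ and the subsequent redefinition of the signature of $r$), rather than the larger $\mathcal O_{n, \sigma, m} \times \mathcal I_{m, n}$; surjectivity is a statement about hitting all of $\mathcal R_{n, m, \sigma}$ and is meaningless unless the codomain has been cut down accordingly. Beyond that bookkeeping point, the argument is entirely routine.
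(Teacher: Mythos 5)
Your proposal is correct and matches the paper's own reasoning exactly: the paper derives the corollary by combining the surjectivity lemma with the left-inverse lemma (explicitly noting that a left-inverse implies injectivity), which is precisely your argument. Nothing further is needed.
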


The above lemma has several consequences. First, it shows that the output of Algorithm~\ref{alg: build_D} is always a WDFA. 
Second, as the function $r$ is bijective, this 
means that generating uniform pairs from the range of $r$ results in a uniform distribution of WDFAs from $\mathcal D_{n, m, \sigma}$.

\begin{lemma}\label{lem: uniform}
    Algorithm~\ref{alg: generate_D} on input $n, m, \sigma$ generates uniformly distributed WDFAs from $\mathcal D_{n, m, \sigma}$.
\end{lemma}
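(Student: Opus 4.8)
The plan is to establish uniformity by decomposing the generation process according to the bijection $r$ and showing that each element of $\mathcal D_{n,m,\sigma}$ receives exactly the same probability. Since Corollary~\ref{cor:r is bijective} tells us $r$ is a bijection between $\mathcal D_{n,m,\sigma}$ and $\mathcal R_{n,m,\sigma}$, and since $f=r^{-1}$ is exactly the (deterministic) map computed by Algorithm~\ref{alg: build_D}, it suffices to show that Algorithm~\ref{alg: generate_D} produces each pair $(O,I)\in\mathcal R_{n,m,\sigma}$ with the same probability. The output WDFA is then $f(O,I)$, and because $f$ is a bijection, uniform sampling over $\mathcal R_{n,m,\sigma}$ transfers to uniform sampling over $\mathcal D_{n,m,\sigma}$.

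First I would argue that Algorithm~\ref{alg: sample_O} outputs a uniformly random matrix from $\mathcal O_{n,\sigma,m}$. The shuffler applied to $1^m0^{n\sigma-m}$ yields a uniformly random bit-vector with exactly $m$ set bits, and hence (after reshaping) a uniformly random matrix in $\{0,1\}^{n\times\sigma}$ with $\|O\|=m$; the rejection loop discards precisely those matrices violating the condition $\|O_j\|\ge 1$, so conditioning on acceptance gives a uniform distribution over $\mathcal O_{n,\sigma,m}$. Next I would show that, conditioned on any fixed output $O$, Algorithm~\ref{alg: sample_I} returns a uniformly random element of $\mathcal I_O$. The key observation is that the mask has its $\sigma$ forced $1$-bits at the positions $1+\sum_{k=1}^{j-1}\|O_k\|$, and the remaining $m-\sigma$ positions (the $\#$ wildcards) are filled by a uniformly random placement of the $n-\sigma-1$ remaining $1$-bits among $m-\sigma$ slots (via the shuffled vector $1^{n-\sigma-1}0^{m-n+1}$). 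This is exactly a uniform choice over all ways to complete the forced bits to a vector with $\|I\|=n-1$ respecting the constraints defining $\mathcal I_O$, so $I$ is uniform on $\mathcal I_O$.

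The main obstacle, and the heart of the argument, is that sampling $O$ and then $I$ conditionally does not obviously yield a uniform distribution over the \emph{pairs}: the probability of producing $(O,I)$ factors as $\Pr[O]\cdot\Pr[I\mid O]=\frac{1}{|\mathcal O_{n,\sigma,m}|}\cdot\frac{1}{|\mathcal I_O|}$, which is constant across all pairs only if $|\mathcal I_O|$ is the \emph{same} for every $O\in\mathcal O_{n,\sigma,m}$. I would therefore prove the crucial fact that $|\mathcal I_O|=|\mathcal I_{O'}|$ for all $O,O'\in\mathcal O_{n,\sigma,m}$ (as hinted in Section~\ref{sec:algo}). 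This follows because $\mathcal I_O$ consists of those $I\in\{0,1\}^m$ with $\|I\|=n-1$ whose $\sigma$ prescribed positions are set; once those $\sigma$ forced bits are fixed, the number of completions depends only on the remaining free length $m-\sigma$ and the remaining number of ones $n-1-\sigma$, both independent of $O$, giving $|\mathcal I_O|=\binom{m-\sigma}{n-1-\sigma}$ uniformly.

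With these pieces in place the conclusion is immediate: every pair $(O,I)\in\mathcal R_{n,m,\sigma}$ is produced with probability $\frac{1}{|\mathcal O_{n,\sigma,m}|\cdot\binom{m-\sigma}{n-1-\sigma}}=\frac{1}{|\mathcal R_{n,m,\sigma}|}$, so Algorithm~\ref{alg: generate_D} samples uniformly from $\mathcal R_{n,m,\sigma}$. Applying the bijection $f=r^{-1}$ from Corollary~\ref{cor:r is bijective}, and noting that $D=f(O,I)$ ranges bijectively over $\mathcal D_{n,m,\sigma}$, each $D\in\mathcal D_{n,m,\sigma}$ is generated with probability $1/|\mathcal D_{n,m,\sigma}|$, which is the claimed uniformity.
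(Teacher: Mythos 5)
Your proposal is correct and follows essentially the same route as the paper's proof: reduce via the bijection $r$ (with Algorithm~\ref{alg: build_D} implementing $r^{-1}$) to uniform sampling of pairs from $\mathcal R_{n,m,\sigma}$, note that \texttt{sample\_O} and \texttt{sample\_I} are uniform on $\mathcal O_{n,\sigma,m}$ and $\mathcal I_O$ respectively, and invoke the key fact that $|\mathcal I_O|=\binom{m-\sigma}{n-\sigma-1}$ is the same for every $O$. You simply spell out in more detail the steps the paper declares ``clear.''
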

\begin{proof}
    In the light of $r$ being a bijection and Algorithm \ref{alg: build_D} implementing the function $r^{-1}$, it remains to argue that the statements $O:= \texttt{sample\_O}(n, m , \sigma)$ and 
    $I:= \texttt{sample\_I}(O)$ from Algorithm~\ref{alg: generate_D} in fact generate uniformly distributed pairs from the domain of $r^{-1}$, i.e., from $\mathcal R_{n, m, \sigma}$. It is clear that \texttt{sample\_O}$(n, m , \sigma)$ results in a uniformly distributed element $O$ from $\mathcal O_{n, \sigma, m}$ and that \texttt{sample\_I}$(O)$ results in a uniformly distributed element $I$ from $\mathcal I_O$. It thus remains to observe that $|\mathcal I_{O}|$ is identical for all $O\in \mathcal O_{n, \sigma, m}$, namely $|\mathcal I_{O}| = \binom{m - \sigma}{n - \sigma -1}$ for all $O\in \mathcal O_{n, \sigma, m}$. This completes the proof.
\end{proof}

\subsection{Run-time and Space}

We now analyze the number of iterations of Algorithm~\ref{alg: sample_O}, that is, the expected number of rejections before extracting a bit-matrix $O$ with $\|O_j\|>0$ for all $j\in[\sigma]$. Algorithm~\ref{alg: constant space} is clearly equivalent to Algorithm \ref{alg: generate_D} also under this aspect, since at Lines \ref{line: reject 1} and \ref{line:rejection 2} we re-start the algorithm whenever we generate a column $O_j$ without non-zero entries. We prove:

\begin{restatable}{lemma}{analysis}\label{lem: rejection rate}
    Assume that $m\ge \sigma \ln (e \cdot \sigma)$. The expected number of iterations of Algorithm~\ref{alg: sample_O} (equivalently, rejections of Algorithm \ref{alg: constant space}) is at most $1.6$. Furthermore, the algorithm terminates after $O(\log m)$ iterations with probability at least $1 - m^{-c}$ for any constant $c>0$. 
\end{restatable}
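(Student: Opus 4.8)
The plan is to observe that Algorithm~\ref{alg: sample_O} performs a sequence of independent rejection trials: each iteration draws a uniformly random placement of $m$ ones among the $n\sigma$ cells and accepts iff no column is empty. Hence the number of iterations $T$ is geometric, the trials being i.i.d., with success probability $p := \Pr[\|O_j\|\ge 1 \text{ for all } j\in[\sigma]]$. Both claims then follow from a single estimate, namely an upper bound on the per-trial failure probability $1-p$: the expected count is $\E[T]=1/p$, and the tail is $\Pr[T>k]=(1-p)^k$. So the whole lemma reduces to bounding $1-p$ by a constant strictly below $1$.

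First I would bound the probability that one fixed column $j$ is empty. Since the $m$ ones are placed uniformly among the $n\sigma$ cells and column $j$ occupies $n$ of them, this event has probability $\binom{n(\sigma-1)}{m}/\binom{n\sigma}{m} = \prod_{i=0}^{m-1}\frac{n(\sigma-1)-i}{n\sigma-i}$. Each factor equals $1-\frac{n}{n\sigma-i}\le 1-\frac1\sigma$, because $n\sigma-i\le n\sigma$, so the product is at most $(1-1/\sigma)^m\le e^{-m/\sigma}$. A union bound over the $\sigma$ columns then gives $1-p\le \sigma\, e^{-m/\sigma}$.

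Next I would substitute the hypothesis $m\ge \sigma\ln(e\sigma)$, which yields $e^{-m/\sigma}\le 1/(e\sigma)$ and hence $1-p\le 1/e$, i.e.\ $p\ge 1-1/e$. This immediately gives $\E[T]=1/p\le e/(e-1)\approx 1.58 < 1.6$, proving the first claim. For the high-probability statement I would use the geometric tail: since every trial fails independently with probability at most $1/e$, we have $\Pr[T>k]\le(1-p)^k\le e^{-k}$, and choosing $k=c\ln m=O(\log m)$ gives $\Pr[T>c\ln m]\le m^{-c}$ for any constant $c>0$.

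The only nontrivial point — the ``main obstacle'' — is making the per-column emptiness bound sharp enough that a union bound over all $\sigma$ columns still leaves $1-p$ a constant below $1$. A union bound is usually lossy, but here it is essentially tight: the hypothesis $m\ge\sigma\ln(e\sigma)$ is calibrated precisely so that $\sigma e^{-m/\sigma}\le 1/e$, and the resulting constant $e/(e-1)$ is exactly what produces the stated $1.6$. Consequently no inclusion–exclusion refinement is needed, and the clean elementary bound $(1-1/\sigma)^m\le e^{-m/\sigma}$ suffices throughout.
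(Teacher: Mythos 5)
Your proposal is correct and follows essentially the same route as the paper's proof: both bound the probability that a fixed column is empty by the product $\prod_{i}\bigl(1-\tfrac{n}{n\sigma-i}\bigr)\le(1-1/\sigma)^m\le\tfrac{1}{e\sigma}$ under the hypothesis $m\ge\sigma\ln(e\sigma)$, apply a union bound over the $\sigma$ columns to get per-trial failure probability at most $1/e$, and then read off the expectation $1/(1-1/e)\le 1.6$ and the geometric tail $(1/e)^{c\ln m}=m^{-c}$. The only cosmetic difference is that you pass through $e^{-m/\sigma}$ while the paper substitutes the hypothesis directly into $(1-1/\sigma)^m$; the two are equivalent.
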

\ArxivOrCr{
\begin{proof}
    Consider a fixed iteration of the repeat loop and a fixed column $j\in [\sigma]$ of the matrix $O$. Our goal is to bound the probability that $\|O_j\|=0$. Let us denote this event with $E_j$. Consider the following urn experiment, equivalent to our setting. An urn contains $n$ white and $(\sigma - 1)\cdot n$ black balls, we draw $m$ balls without replacement (in our case, the $n$ white balls correspond to the $n$ cells of $O_j$). Let $F_i$, for $i\in [m]$ denote the event that we do not draw a white ball in the $i$'th draw. It holds that $\Pr[E_j]=\Pr[\bigcap_{i\in [m]}F_i]$, the probability that we did not draw a white ball in any of the $m$ trials (i.e.\ this is the probability that none of the cells of $O_j$ is selected and gets a 1-bit, after inserting $m$ 1-bits in $O$ at random positions without replacement).
    This probability can be expressed as 
    \begin{align*}
        \Pr\Big[\bigcap_{i\in [m]}F_i\Big] 
        &= \prod_{i\in [m]} \Pr\Big[F_i \;\big|\; \bigcap_{k = 1}^{i - 1} F_k \Big]
        = \prod_{i\in [m]} \frac{(\sigma - 1) n - (i - 1)}{\sigma n - (i - 1)}\\
        &= \prod_{i\in [m]} \Big( 1 - \frac{n}{n\sigma - (i - 1)}\Big)
        \le \Big( 1 - \frac{1}{\sigma }\Big)^{m}
        \le \Big( 1 - \frac{1}{\sigma }\Big)^{\sigma \ln (e\cdot \sigma)}
        \le \frac{1}{e\cdot\sigma}.
    \end{align*}
    Now using a union bound over all $\sigma$ columns, we get that the probability that there exists $j\in[\sigma]$ such that $\|O_j\|=0$ is upper bounded by $1/e$. It follows that the expected number of repetitions until it holds that $\|O_j\|=0$ for all $j\in[\sigma]$ is $1 / (1 - 1 / e) \le 1.6$. This shows the bound on the expected number of iterations. Now fix some constant $c>0$ and assume that the algorithm's repeat loop runs for $T = c\ln m$ iterations. We then get that the probability that in each of these loop iterations, there exists a column $j\in[\sigma]$ such that $\|O_j\|=0$, is at most $(1/e)^{c\ln m} = m^{-c}$. Hence, the algorithm terminates after at most $c\ln m$ iterations with probability at least $1 - m^{-c}$.
\end{proof}
}{We refer the reader to the full version of this article~\cite{full-version} for the proof of the above lemma.}
Now assume that $\sigma\le m / \ln m$. This implies that $e \cdot \sigma \le m$ (for $m$ larger than a constant), which together with the initial assumption implies that $\sigma \ln (e \cdot \sigma) \le \sigma \ln m \le m$. This is exactly the condition in 
Lemma~\ref{lem: rejection rate}. 
Hence, if $\sigma \leq m/\ln m$ then the expected number of rejections of Algorithm \ref{alg: constant space} is $O(1)$ (or $O(\log m)$ with high probability).
Our main Theorem~\ref{th:main theorem} follows from the fact that the sequential shuffler of \cite{hiddenShuffle} uses constant space, its functions $\mathtt{pop}()$ and $\mathtt{empty}()$ run in constant time, and the while loop at Line \ref{line:while} of Algorithm \ref{alg: constant space} runs for at most $m$ iterations (less only in case of rejection) every time Algorithm \ref{alg: constant space} is executed.

\section{Counting Wheeler DFAs}\label{sec:DFAcount}

In this section, we use the WDFA characterization of Section~\ref{sec:algo} to give an exact formula for the number $|\mathcal D_{n, m, \sigma}|$ of WDFAs with $n$ nodes and $m$ edges on effective alphabet $[\sigma]$ with Wheeler order $1<2<\dots < n$. \ArxivOrCr{}{All proofs of this section can be found in the full version of the article~\cite{full-version}.} From our previous results, all we need to do is to compute the cardinalities of $\mathcal O_{n,m,\sigma}$ and $\mathcal I_O$. 

\begin{restatable}{lemma}{cardinality}
$|\mathcal O_{n,m,\sigma}| = \sum_{j=0}^\sigma (-1)^j \binom{\sigma}{j} \binom{n(\sigma-j)}{m}$.
\end{restatable}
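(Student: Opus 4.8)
The plan is to compute $|\mathcal{O}_{n,\sigma,m}|$ by counting bit-matrices in $\{0,1\}^{n\times\sigma}$ with exactly $m$ set bits subject to the constraint that \emph{every} column contains at least one set bit, and to handle the ``at least one per column'' constraints via inclusion-exclusion. The natural sample space is the set of all matrices in $\{0,1\}^{n\times\sigma}$ with exactly $m$ ones and no column constraint; there are $\binom{n\sigma}{m}$ of these, since we are simply choosing $m$ of the $n\sigma$ cells to set. We then subtract off those matrices that violate effectiveness, i.e.\ those having at least one empty column.

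The key step is to set up the inclusion-exclusion over the ``bad events.'' For a subset $S\subseteq[\sigma]$ of columns, let $A_S$ be the event that all columns indexed by $S$ are entirely zero. If we force the $|S|$ columns in $S$ to be zero, then all $m$ set bits must be placed among the remaining $n(\sigma-|S|)$ cells, giving $\binom{n(\sigma-|S|)}{m}$ matrices satisfying $A_S$ (with the convention $\binom{a}{m}=0$ when $a<m$, which correctly kills terms in which too many columns are forced empty). By the inclusion-exclusion principle, the number of matrices with no empty column is
\begin{align*}
    |\mathcal O_{n,\sigma,m}|
    = \sum_{S\subseteq[\sigma]} (-1)^{|S|}\binom{n(\sigma - |S|)}{m}
    = \sum_{j=0}^{\sigma}(-1)^{j}\binom{\sigma}{j}\binom{n(\sigma - j)}{m},
\end{align*}
where in the last equality I group the $\binom{\sigma}{j}$ subsets $S$ of each fixed size $j=|S|$, all of which contribute the same term $\binom{n(\sigma-j)}{m}$. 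This is precisely the claimed formula.

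Making the inclusion-exclusion rigorous amounts to recalling its standard form: if $B_j$ counts matrices avoiding all bad events, then $B_j = \sum_{S}(-1)^{|S|}N(S)$ where $N(S)$ is the number of outcomes in $\bigcap_{k\in S}A_k$ (the outcomes satisfying \emph{at least} the constraints indexed by $S$), and here $N(S)=\binom{n(\sigma-|S|)}{m}$ depends on $S$ only through its cardinality. I would phrase this cleanly by defining $A_k$ as the event that column $k$ is empty and applying inclusion-exclusion to $|\mathcal O_{n,\sigma,m}| = |\{O : \|O\|=m\}\setminus\bigcup_{k\in[\sigma]}A_k|$.

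I do not expect a genuine obstacle here; the only points requiring care are the boundary conventions for the binomial coefficients (ensuring $\binom{n(\sigma-j)}{m}$ is interpreted as $0$ when $n(\sigma-j)<m$, so that summands corresponding to forcing more columns empty than can accommodate the $m$ ones vanish automatically) and verifying that the symmetry $N(S)=N(S')$ whenever $|S|=|S'|$ justifies collapsing the sum over subsets into a sum over sizes weighted by $\binom{\sigma}{j}$. Both are routine, so the proof should be short.
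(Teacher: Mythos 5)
Your proof is correct and takes essentially the same route as the paper: both count matrices with $m$ set bits via inclusion-exclusion over the events that individual columns are empty, using $\bigl|\bigcap_{k\in S}A_k\bigr|=\binom{n(\sigma-|S|)}{m}$ and collapsing the sum over subsets $S$ into a sum over their sizes. The only cosmetic difference is that you write the signed sum over all subsets directly, whereas the paper first computes $\bigl|\bigcup_j \alpha_j\bigr|$ and subtracts it from $\binom{n\sigma}{m}$; these are the same calculation.
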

\ArxivOrCr{
\begin{proof}
    Elements of $\mathcal O_{n,m,\sigma}$ are $n\times \sigma$ binary matrices with non-zero columns. For $j\in[\sigma]$, let us denote by $\alpha_j \subseteq \{0,1\}^{n\times \sigma}$ 
    the set of all $n\times \sigma$ binary matrices $O$
    such that $\|O_j\| = 0$.
    We have that
    \[
        |\mathcal O_{n,m,\sigma}|=\binom{n\sigma}{m} - \Big|\bigcup_{j=1}^\sigma \alpha_j\Big|.
    \]    
    Using the inclusion-exclusion principle, we obtain
    \begin{align*}
      \big|\bigcup_{j=1}^\sigma \alpha_j\big|   
      & = \sum_{j=1}^\sigma \left|\alpha_j\right| - \sum_{1\le i<j\le\sigma}\left|\alpha_i\cap \alpha_j\right| 
       + \sum_{1\le i<j<k\le\sigma} \left| \alpha_i\cap \alpha_j\cap \alpha_k \right|\\
       & \hspace{6cm} -\; \ldots \; + (-1)^{\sigma-1} \left| \alpha_1\cap \alpha_2\cap\dots \cap\alpha_\sigma \right|\\
      & = \sum_{j=1}^\sigma (-1)^{j-1} \binom{\sigma}{j} \binom{n(\sigma - j)}{m}.
    \end{align*}
    Therefore
    \[
    |\mathcal O_{n,m,\sigma}|=\binom{n\sigma}{m}-\sum_{j=1}^\sigma (-1)^{j-1} \binom{\sigma}{j} \binom{n(\sigma - j)}{m} = \sum_{j=0}^\sigma (-1)^j \binom{\sigma}{j} \binom{n(\sigma -j)}{m}
    \]
    and our claim follows. 
\end{proof}
}{This lemma is obtained via an inclusion-exclusion argument.}
From Algorithm \ref{alg: sample_I}, it is immediate that $|\mathcal I_O| = \binom{m-\sigma}{n-\sigma-1}$ for all $O \in \mathcal O_{n,m,\sigma}$ (see also the proof of Lemma~\ref{lem: uniform}). 
Since $r:\mathcal D_{n, m, \sigma} \rightarrow \mathcal R_{n, m, \sigma}$ is bijective (Corollary \ref{cor:r is bijective}), we obtain an exact formula for $|\mathcal D_{n, m, \sigma}|$: 

\begin{theorem}\label{th: number of WDFAs}
The number $|\mathcal D_{n, m, \sigma}|$ of WDFAs with set of nodes $[n]$ and $m$ transitions labeled from the effective alphabet $[\sigma]$, for which $1<2<\dots < n$ is a Wheeler order is
    $$|\mathcal D_{n, m, \sigma}| = \binom{m-\sigma}{n-\sigma-1}\sum_{j=0}^\sigma (-1)^j \binom{\sigma}{j} \binom{n(\sigma-j)}{m}.
    $$
\end{theorem}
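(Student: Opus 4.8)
The plan is to read the count off the bijective characterization built up in Section~\ref{sec:uniform WDFAs}, reducing the problem to two independent cardinality computations. First I would invoke Corollary~\ref{cor:r is bijective}: since $r$ is a bijection from $\mathcal D_{n, m, \sigma}$ onto $\mathcal R_{n, m, \sigma}$, we have $|\mathcal D_{n, m, \sigma}| = |\mathcal R_{n, m, \sigma}|$, so it suffices to count pairs in $\mathcal R_{n, m, \sigma}$. By Definition~\ref{def:new signature}, each such pair is $(O, I)$ with $O\in\mathcal O_{n, \sigma, m}$ and $I\in\mathcal I_O$; partitioning by the choice of the first coordinate gives
$$|\mathcal R_{n, m, \sigma}| = \sum_{O\in \mathcal O_{n, \sigma, m}} |\mathcal I_O|.$$

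The key step is the observation, already used in the proof of Lemma~\ref{lem: uniform}, that the inner cardinality is the same for every $O$. Indeed $\mathcal I_O$ forces $\sigma$ of the $m$ coordinates of $I$ to be $1$ (one at the start of each letter-block, at positions $1+\sum_{k=1}^{j-1}\|O_k\|$), and these $\sigma$ positions are distinct, so the remaining $(n-1)-\sigma$ one-bits must be placed freely among the other $m-\sigma$ coordinates; hence $|\mathcal I_O| = \binom{m-\sigma}{n-\sigma-1}$ independently of $O$. This lets the sum factor as $|\mathcal R_{n, m, \sigma}| = \binom{m-\sigma}{n-\sigma-1}\cdot|\mathcal O_{n, \sigma, m}|$. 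Substituting the value of $|\mathcal O_{n, \sigma, m}|$ furnished by the preceding cardinality lemma then yields exactly the stated product, completing the proof.

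The genuine combinatorial content sits entirely in the count of $\mathcal O_{n, \sigma, m}$, which the theorem is allowed to assume but which I would establish by inclusion--exclusion over the ``empty column'' events. Starting from the $\binom{n\sigma}{m}$ binary $n\times\sigma$ matrices with exactly $m$ ones, I would subtract, for each prescribed set $S$ of columns required to be all-zero, the $\binom{n(\sigma-|S|)}{m}$ matrices fitting all $m$ ones into the surviving $n(\sigma-|S|)$ cells, with the usual alternating signs; grouping the $2^\sigma$ subsets by size $j=|S|$ collapses this to $\sum_{j=0}^\sigma (-1)^j\binom{\sigma}{j}\binom{n(\sigma-j)}{m}$. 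I do not anticipate a real obstacle: the empty-column events are symmetric under the $\binom{\sigma}{j}$ choices of vanishing columns, so the sum simplifies cleanly, and degenerate cases (e.g.\ $\binom{n(\sigma-j)}{m}=0$ once $n(\sigma-j)<m$) are handled automatically by the convention that such binomials vanish. With both factors in hand, assembling the final formula is immediate.
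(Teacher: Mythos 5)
Your proposal is correct and follows essentially the same route as the paper: it invokes the bijection $r:\mathcal D_{n,m,\sigma}\to\mathcal R_{n,m,\sigma}$, uses that $|\mathcal I_O|=\binom{m-\sigma}{n-\sigma-1}$ is independent of $O$ (since the $\sigma$ forced positions are distinct because each $\|O_k\|\ge 1$), and computes $|\mathcal O_{n,\sigma,m}|$ by inclusion--exclusion over the empty-column events, exactly as in the paper's appendix. No gaps.
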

\ArxivOrCr{
\begin{proof}
    Elements of $\mathcal O_{n,m,\sigma}$ are $n\times \sigma$ binary matrices with non-zero columns. For $j\in[\sigma]$, let us denote by $\alpha_j \subseteq \{0,1\}^{n\times \sigma}$ 
    the set of all $n\times \sigma$ binary matrices $O$
    such that $\|O_j\| = 0$.
    We have that
    \[
        |\mathcal O_{n,m,\sigma}|=\binom{n\sigma}{m} - \Big|\bigcup_{j=1}^\sigma \alpha_j\Big|.
    \]    
    Using the inclusion-exclusion principle, we obtain
    \begin{align*}
      \big|\bigcup_{j=1}^\sigma \alpha_j\big|   
      & = \sum_{j=1}^\sigma \left|\alpha_j\right| - \sum_{1\le i<j\le\sigma}\left|\alpha_i\cap \alpha_j\right| 
       + \sum_{1\le i<j<k\le\sigma} \left| \alpha_i\cap \alpha_j\cap \alpha_k \right|\\
       & \hspace{6cm} -\; \ldots \; + (-1)^{\sigma-1} \left| \alpha_1\cap \alpha_2\cap\dots \cap\alpha_\sigma \right|\\
      & = \sum_{j=1}^\sigma (-1)^{j-1} \binom{\sigma}{j} \binom{n(\sigma - j)}{m}.
    \end{align*}
    Therefore
    \[
    |\mathcal O_{n,m,\sigma}|=\binom{n\sigma}{m}-\sum_{j=1}^\sigma (-1)^{j-1} \binom{\sigma}{j} \binom{n(\sigma - j)}{m} = \sum_{j=0}^\sigma (-1)^j \binom{\sigma}{j} \binom{n(\sigma -j)}{m}
    \]
    and our claim follows. 
\end{proof}
}{}

\ArxivOrCr{We can re-formulate Theorem \ref{th: number of WDFAs} 
by lifting the requirement that the alphabet $\Sigma$ is effective. In order to do so, it is sufficient to fix as effective alphabet any nonempty subset of $\Sigma$. We immediately obtain: 
\begin{corollary}\label{cor: number of WDFAs non-effective}
    Let $\mathcal{\hat D}_{n, m, \sigma}$ denote the set of all WDFAs with set of nodes $[n]$ and $m$ transitions labeled from a fixed totally-ordered alphabet $\Sigma$ of cardinality $\sigma$, for which $1<2<\dots < n$ is a Wheeler order. Then: 
    $$
       |\mathcal{\hat D}_{n, m, \sigma}| =
       \sum_{k=1}^\sigma \binom{\sigma}{k} \binom{m-k}{n-k-1}\sum_{j=0}^k (-1)^j \binom{k}{j} \binom{n(k-j)}{m}. 
    $$
\end{corollary}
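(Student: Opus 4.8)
The plan is to reduce the count to the effective-alphabet formula of Theorem \ref{th: number of WDFAs} by partitioning $\mathcal{\hat D}_{n, m, \sigma}$ according to which characters of $\Sigma$ actually occur. For any WDFA $D \in \mathcal{\hat D}_{n, m, \sigma}$, let $\mathrm{eff}(D) \subseteq \Sigma$ be the set of characters labeling at least one transition of $D$; since $D$ has $m \ge n - 1 \ge 1$ transitions (using $\sigma \le n-1$), the set $\mathrm{eff}(D)$ is nonempty. Grouping the WDFAs first by the cardinality $k = |\mathrm{eff}(D)|$ and then by the subset itself gives
\[
    |\mathcal{\hat D}_{n, m, \sigma}| = \sum_{k=1}^{\sigma} \; \sum_{\substack{S \subseteq \Sigma \\ |S| = k}} \big|\{D \in \mathcal{\hat D}_{n, m, \sigma} : \mathrm{eff}(D) = S\}\big|.
\]

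Next I would argue that, for each fixed subset $S \subseteq \Sigma$ of size $k$, the inner count equals $|\mathcal D_{n, m, k}|$. The key observation is that the Wheeler conditions of Definition \ref{def:WDFA} depend on the alphabet only through its total order: condition (\ref{wheeler 2}) does not reference labels at all, while condition (\ref{wheeler 1}) uses only the relation $a \prec a'$. Hence the unique order-preserving bijection between $S$ (with the order inherited from $\Sigma$) and $[k]$ induces a bijection between the WDFAs over $\Sigma$ whose effective alphabet is exactly $S$ and the WDFAs over $[k]$ whose effective alphabet is exactly $[k]$. This relabeling fixes the state set $[n]$, the transition count $m$, the Wheeler order $1 < 2 < \dots < n$, and the property that every symbol of the target alphabet is used; the latter family is precisely $\mathcal D_{n, m, k}$, so the inner count is $|\mathcal D_{n, m, k}|$ regardless of which $S$ of size $k$ was chosen.

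Since there are exactly $\binom{\sigma}{k}$ subsets $S$ of size $k$, the displayed double sum collapses to $\sum_{k=1}^{\sigma} \binom{\sigma}{k}\, |\mathcal D_{n, m, k}|$. Substituting the closed form for $|\mathcal D_{n, m, k}|$ from Theorem \ref{th: number of WDFAs} (with $\sigma$ replaced by $k$) then yields exactly the claimed expression. I expect the only delicate point to be the bijection step: one must check that the order-preserving renaming maps WDFAs to WDFAs in \emph{both} directions, and that it respects the effective-alphabet constraint so that each WDFA is counted against the correct value of $k$ and none is omitted or double-counted. Everything else is a routine re-indexing of the partition sum.
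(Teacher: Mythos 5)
Your proposal is correct and matches the paper's argument: the paper obtains the corollary by exactly this partition over the effective alphabet, fixing a nonempty subset $S\subseteq\Sigma$ of size $k$, identifying the WDFAs with effective alphabet $S$ with $\mathcal D_{n,m,k}$ via the order-preserving relabeling, and summing $\binom{\sigma}{k}|\mathcal D_{n,m,k}|$ over $k$. Your write-up simply makes explicit the bijection step that the paper leaves as a one-line remark.
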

}{}
Using similar techniques, in the case where $\sigma$ is not arbitrarily close to $n$, i.e., $\sigma\le (1-\eps)\cdot n$ for some constant $\eps$, we moreover obtain a tight formula for the logarithm of the cardinality of $\mathcal{D}_{n,\sigma}=\bigcup_{m}\mathcal{D}_{n,m,\sigma}$, the set of all Wheeler DFAs with $n$ states over effective alphabet $[\sigma]$ and Wheeler order $1<2<\dots<n$:

\begin{restatable}{theorem}{WDFAbounds}\label{thm:bound n sigma}
    The following bounds hold:
    \begin{enumerate}
        \item $ \log |\mathcal D_{n, \sigma}| \geq n\sigma + (n - \sigma) \log \sigma - (n + \log \sigma)$, for any $n$ and $\sigma \leq n-1$, and 
        \item $\log |\mathcal D_{n, \sigma}| 
        \le n\sigma + (n - \sigma) \log \sigma + O(n)$, for any $n\ge 2/\eps$ and $\sigma \le (1 - \eps)\cdot n$, where $\eps$ is any desired constant such that $\eps\in (0, 1/2]$.
    \end{enumerate}
\end{restatable}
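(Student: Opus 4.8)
The plan is to combine the bijection $r$ (Corollary~\ref{cor:r is bijective}) with the count $|\mathcal I_O| = \binom{\|O\|-\sigma}{n-\sigma-1}$ (established in Lemma~\ref{lem: uniform}) to write
\[
    |\mathcal D_{n, \sigma}| = \sum_{O} \binom{\|O\| - \sigma}{n - \sigma - 1},
\]
where $O$ ranges over all matrices in $\{0,1\}^{n\times\sigma}$ with $\|O_j\|\ge 1$ for every column $j$. Both bounds then reduce to estimating weighted sums of products of binomials, and the single tool powering both directions is the identity $\sum_{k}\binom{M}{k}\binom{k}{K} = \binom{M}{K}2^{M-K}$, which follows from $\binom{M}{k}\binom{k}{K} = \binom{M}{K}\binom{M-K}{k-K}$ by summing out $k$.

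For the upper bound I would drop the nonempty-column constraint and group the matrices by their weight $m = \|O\|$, obtaining
\[
    |\mathcal D_{n,\sigma}| \le \sum_m \binom{n\sigma}{m}\binom{m-\sigma}{n-\sigma-1} \le \sum_m \binom{n\sigma}{m}\binom{m}{n-\sigma-1} = \binom{n\sigma}{n-\sigma-1}\,2^{\,n\sigma - (n-\sigma-1)},
\]
using $\binom{m-\sigma}{n-\sigma-1}\le\binom{m}{n-\sigma-1}$ and then the identity with $M = n\sigma$ and $K = n-\sigma-1$. Taking logarithms, the power-of-two factor contributes $n\sigma + O(n)$, so it remains to show $\log\binom{n\sigma}{n-\sigma-1} \le (n-\sigma)\log\sigma + O(n)$. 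For this I would apply $\binom{a}{b}\le(ea/b)^b$ with $a=n\sigma$ and $b=n-\sigma-1$: the hypotheses $\sigma\le(1-\eps)n$ and $n\ge 2/\eps$ give $n-\sigma-1\ge(\eps/2)n$, hence $\frac{n\sigma}{n-\sigma-1} \le \frac{2\sigma}{\eps} = O(\sigma)$, and the claimed bound follows.

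For the lower bound I would exhibit an explicit sub-family counted \emph{exactly} by the same identity, thereby avoiding any inclusion--exclusion correction. Restrict to matrices $O$ with $O_{j,j}=1$ for every $j\in[\sigma]$ (which automatically forces every column to be nonempty) and the remaining $\sigma(n-1)$ cells arbitrary. Summing $|\mathcal I_O|$ over this sub-family and grouping by the number $w$ of set bits among the free cells, so that $\|O\|-\sigma = w$, yields
\[
    |\mathcal D_{n,\sigma}| \ge \sum_{w}\binom{\sigma(n-1)}{w}\binom{w}{n-\sigma-1} = \binom{\sigma(n-1)}{n-\sigma-1}\,2^{\,\sigma(n-1)-(n-\sigma-1)}.
\]
Bounding $\binom{\sigma(n-1)}{n-\sigma-1}\ge\left(\frac{\sigma(n-1)}{n-\sigma-1}\right)^{n-\sigma-1}\ge\sigma^{n-\sigma-1}$ (using $n-1\ge n-\sigma-1$) and simplifying the exponent of $2$ to $n\sigma-n+1$ gives the clean value $n\sigma - n + (n-\sigma-1)\log\sigma + 1$, which is exactly one more than the claimed $n\sigma + (n-\sigma)\log\sigma - (n+\log\sigma)$; note that this direction needs only $\sigma\le n-1$, matching the statement.

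I expect the sole delicate point to be the upper bound's binomial estimate: controlling the ratio $\frac{n\sigma}{n-\sigma-1}$ is precisely where the restriction $\sigma\le(1-\eps)n$ (and the $\eps$-dependent $O(n)$ slack) becomes unavoidable, since as $\sigma\to n$ this ratio blows up and the clean $(n-\sigma)\log\sigma$ form degrades. The lower bound, by contrast, is the genuinely clean direction, and the diagonal sub-family makes it tight up to the stated additive term.
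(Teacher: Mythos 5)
Your proposal is correct and follows essentially the same route as the paper: both bounds rest on the representation $|\mathcal D_{n,\sigma}|=\sum_O|\mathcal I_O|$, the lower bound forces $\sigma$ ones in $O$ (you use the diagonal, the paper uses the first row --- equinumerous sub-families) and then bounds the same binomial $\binom{\sigma(n-1)}{n-\sigma-1}\ge\sigma^{n-\sigma-1}$, while the upper bound drops constraints and controls $\log\binom{n\sigma}{n-\sigma-1}$ via $\binom{a}{b}\le(ea/b)^b$ under $\sigma\le(1-\eps)n$ exactly as the paper does. The only difference is cosmetic: you evaluate the sums over $m$ with the identity $\sum_k\binom{M}{k}\binom{k}{K}=\binom{M}{K}2^{M-K}$, where the paper reaches the same closed form through a trivariate-polynomial/multinomial computation (lower bound) and a cruder $\sum_m|\mathcal O_{n,\sigma,m}|\le 2^{n\sigma}$ (upper bound).
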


\ArxivOrCr{
\begin{proof}
    (1 --- Lower bound): Fix $m$ with $n-1\le m\le n\sigma$ and consider the set $\mathcal{D}'_{n,m,\sigma}\subseteq \mathcal{D}_{n,m,\sigma}$ of all WDFAs over the size-$\sigma$ alphabet with $n$ states and $m$ transitions and the source state having $\sigma$ outgoing transitions. Let us denote $\nu_m:= |\mathcal{D}'_{n,m,\sigma}|$.
    With our representation, these WDFAs have the only restriction that the first row of the out-matrix contains only 1s.
    Obviously, such an out-matrix has always at least one 1-bit in each column.
    Using our representation, we can thus obtain that the number of such WDFAs is
    \begin{align*}
        \nu_m
        &=\binom{n\sigma-\sigma}{m-\sigma}\binom{m-\sigma}{n-\sigma-1}
        =\frac{(n\sigma-\sigma)!}{(m-\sigma)!(n\sigma-m)!}\frac{(m-\sigma)!}{(n-\sigma-1)!(m-n+1)!}\\
        &= \frac{(n\sigma-\sigma)!}{(n\sigma-m)!(m-n+1)!(n-1-\sigma)!}.
    \end{align*}
    Now consider the following three-variate polynomial 
    \[
        f(x, y, z) = (x+y+z)^{n\sigma-\sigma} 
        = \sum_{(i, j, k)\in [n \sigma - \sigma]^3: i + j + k = n \sigma - \sigma}  \alpha_{i, j, k} \cdot x^i y^j z^k. 
    \]
    Clearly, $\alpha_{n\sigma-m, m-(n-1), (n-1)-\sigma}=\nu_m$ by the multinomial theorem. Now consider the univariate polynomial $f(1, 1, z) = (2+z)^{n\sigma-\sigma} = \sum_{k = 0}^{n\sigma - \sigma} \beta_k z^{k}$ and observe that 
    \begin{align*}
        \beta_{n - 1 - \sigma} 
        &= \sum_{(i, j)\in [n\sigma -\sigma]^2 : i + j = n\sigma - n + 1} \alpha_{i, j, n - 1 - \sigma}
        = \sum_{\ell = 0}^{n\sigma - n + 1} \alpha_{\ell, n\sigma - n + 1 - \ell, n - 1 - \sigma}\\
        &= \sum_{m = n - 1}^{n\sigma} \alpha_{n\sigma-m, m-(n-1), (n-1)-\sigma}
        = \sum_{m = n - 1}^{n\sigma} \nu_m
    \end{align*}
    On the other hand 
    \begin{align*}
        \beta_{n - 1 - \sigma} 
        = \binom{n\sigma-\sigma}{n-1-\sigma}2^{(n\sigma-\sigma)-(n-1-\sigma)}
        = \binom{n\sigma-\sigma}{n-1-\sigma}2^{n\sigma-n+1}.
    \end{align*}
    We thus obtain (by taking the base-2-logarithm)
    \begin{align*}
        \log |\mathcal{D}_{n,\sigma}|
        &\ge \log \sum_{m=n-1}^{n\sigma} \nu_m \\
        &\ge n\sigma - n + \log \binom{n\sigma-\sigma}{n-1-\sigma}\\
        &\ge n\sigma - n +\log \left(\Big( \frac{n\sigma-\sigma}{n-1-\sigma}\Big)^{n-1-\sigma}\right)\\
        &\ge n\sigma - n + (n - 1 - \sigma) \cdot \log\sigma \\
        &= n\sigma + (n - \sigma) \cdot \log \sigma - (n + \log \sigma)
    \end{align*}
     
    (2 --- Upper bound): Using our representation of WDFAs by bit-matrices and bit-vectors and the fact that $|\mathcal I_O|\le \binom{m - \sigma}{n - \sigma - 1}$, we get
    \begin{align*}
        |\mathcal{D}_{n,\sigma}|
        &= \sum_{m = n - 1}^{n\sigma}
        |\mathcal{D}_{n,m,\sigma}|
        \le \sum_{m = n - 1}^{n\sigma}
        |\mathcal O_{n, \sigma, m}| \cdot \binom{m - \sigma}{n - \sigma - 1}\\
        &\le \binom{n\sigma - \sigma}{n - \sigma - 1} \cdot \sum_{m = n - 1}^{n\sigma}
        |\mathcal O_{n, \sigma, m}|
        \le \binom{n\sigma - \sigma}{n - \sigma - 1} \cdot 2^{n\sigma}
        \le \Big(\frac{e (n\sigma - \sigma)}{n - \sigma - 1}\Big)^{n - \sigma} \cdot 2^{n\sigma}.
    \end{align*}
    Taking logarithms, we thus get 
    \begin{align*}
        \log |\mathcal{D}_{n,\sigma}|
        \le n \sigma + (n - \sigma)\cdot \log \Big(\frac{e (n\sigma - \sigma)}{n - \sigma - 1}\Big)
        \le n \sigma + (n - \sigma)\cdot \log \Big(\frac{e \sigma n}{n - \sigma - 1}\Big)
    \end{align*}
    Using the assumption that $\sigma \le (1 - \eps) n$ for some constant $\eps\le 1/2$, we obtain that the logarithm above can be upper bounded by $\log (e \sigma\cdot \frac{n}{\eps n - 1}\big)$. Now assuming that $n\ge 1/\eps$, it holds that the function $f(n)=\frac{n}{\eps n - 1}$ is monotonically decreasing and hence $f(n)\le f(\frac{2}{\eps})=\frac{2}{\eps}=O(1)$ for all $n\ge \frac{2}{\eps}$. Altogether, 
    \[
        \log |\mathcal{D}_{n,\sigma}| 
        \le n \sigma + (n - \sigma) \cdot (\log \sigma + O(1)) 
        = n \sigma + (n - \sigma) \cdot \log \sigma + O(n).\qedhere
    \]
\end{proof}
}{\medskip}

Note that $\log |\mathcal{D}_{n,\sigma}|$ is the information-theoretic worst-case number of bits necessary (and sufficient) to encode a WDFA from $\mathcal{D}_{n,\sigma}$. Our Theorem \ref{thm:bound n sigma} states that, up to an additive $\Theta(n)$ number of bits, this value is of $ n\sigma + (n - \sigma) \log \sigma$ bits. As a matter of fact, our encoding $r(D)=(O,I)$ of Section \ref{sec:uniform WDFAs}, opportunely represented using succinct bitvectors \cite{RRR}, achieves this bound up to additive lower-order terms and supports efficient navigation of the transition relation. 

\section{Implementation}

We implemented our uniform WDFA sampler in \texttt{C++}.\footnote{Implementation available at \url{https://github.com/regindex/Wheeler-DFA-generation}.} 
We tested our implementation by generating WDFAs with a broad range of parameters:
$n \in \{10^6 \cdot 2^{i}: i=0,\ldots ,6 \}$, $m \in \{n \cdot 2^{i} - 1: i=0,\ldots ,7 \}$  
and $\sigma = 128$. 
To analyze the impact of streaming to disk on the running time, we tested two versions of our code: (1) We stream the resulting WDFA to disk (SSD). (2) We stream the WDFA to a pre-allocated vector residing in internal memory. Note that constant working space is achieved only in case (1). 
Our experiments were run on a server with Intel(R) Xeon(R) W-2245 CPU @ 3.90GHz with 8 cores, 128 gigabytes of RAM, 512 gigabytes of SSD, running Ubuntu 18.04 LTS 64-bit. Working space was measured with \texttt{/usr/bin/time} (Resident set size).

\begin{figure*}[ht!]
    \centering
 	\includegraphics[width=0.44\textwidth, trim={5.5mm 5.5mm 5.0mm 5.5mm}, clip]{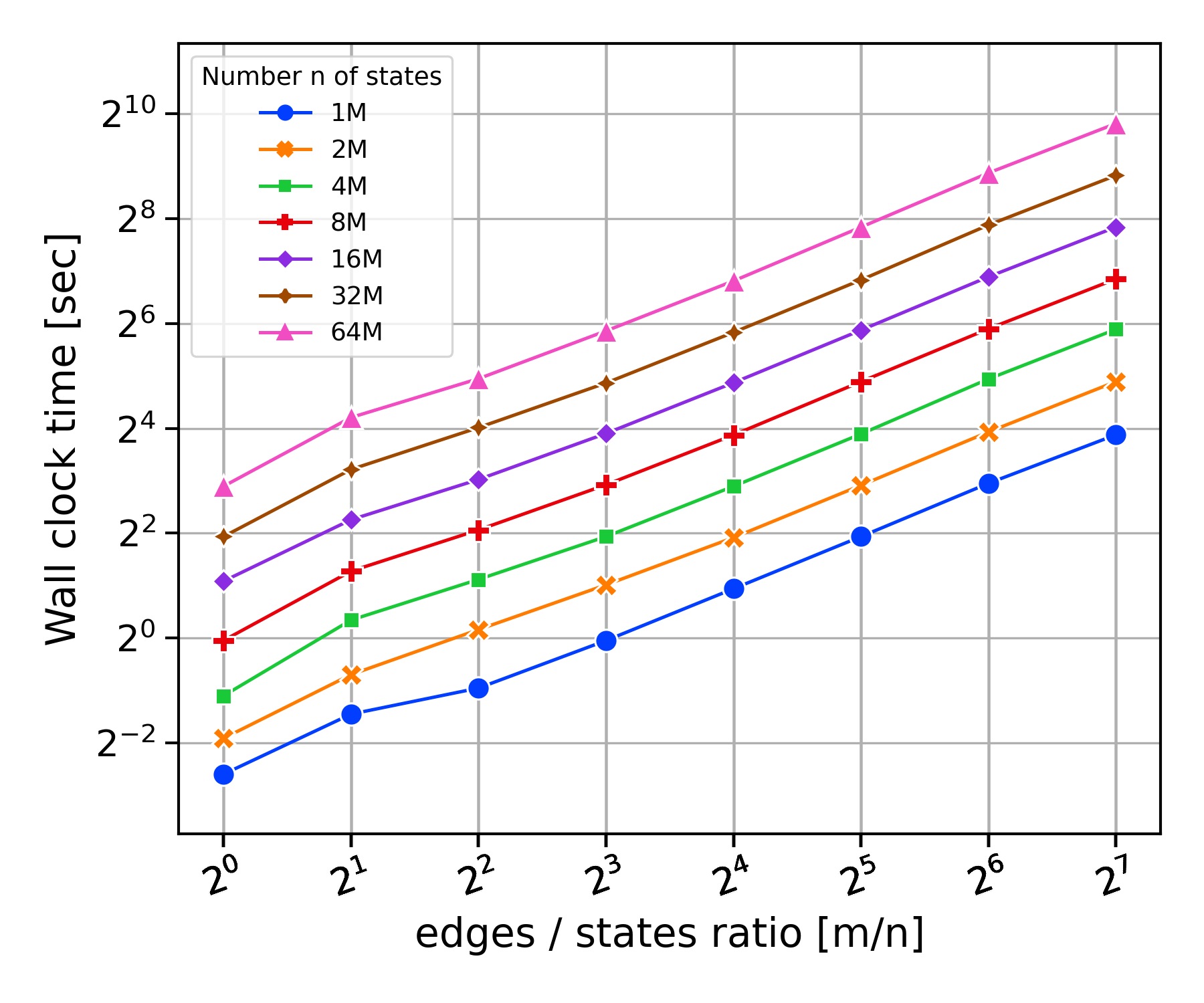}
    \hspace{3mm}
    \includegraphics[width=0.44\textwidth, trim={5.5mm 5.5mm 5.0mm 5.5mm}, clip]{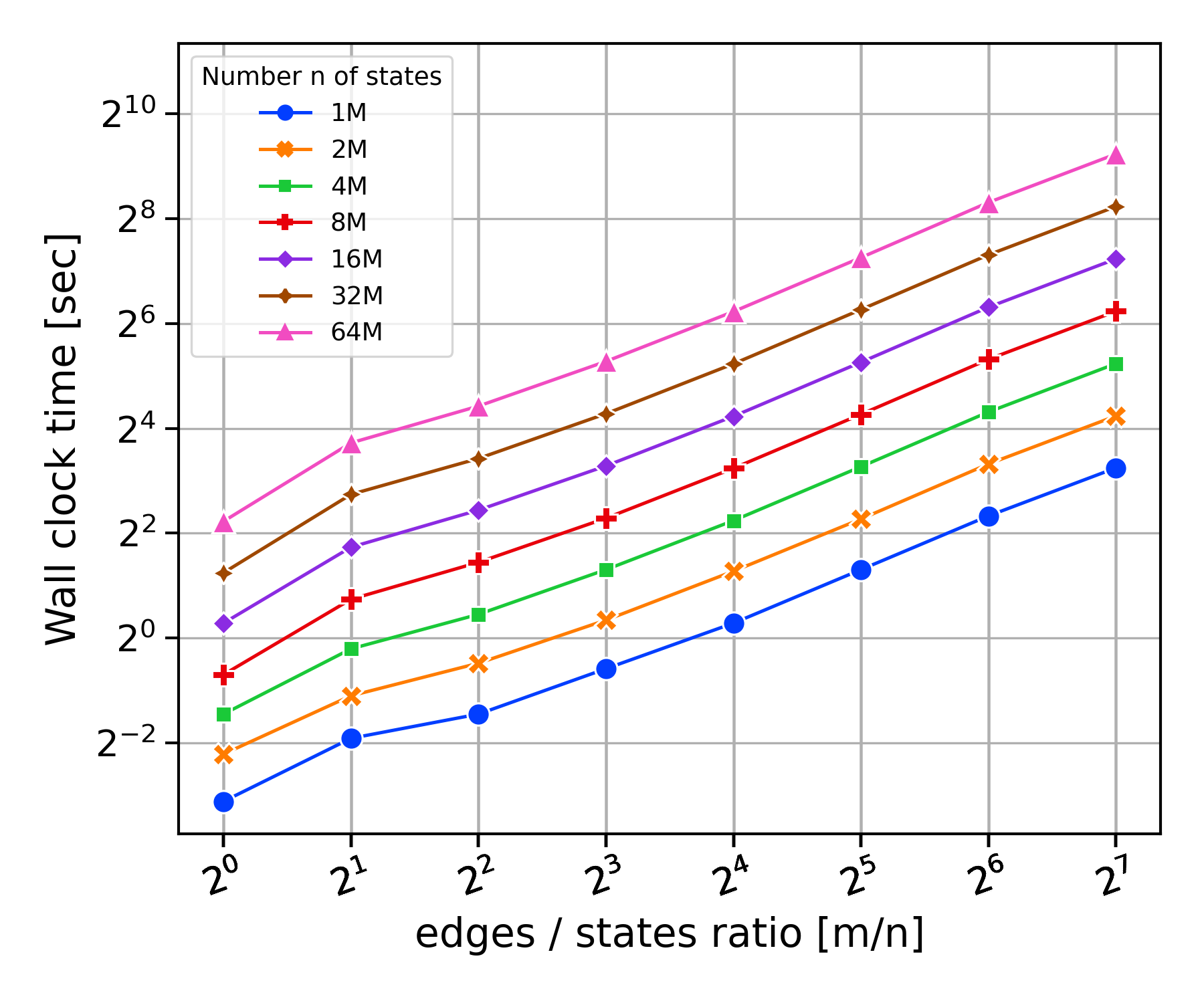}
    \caption{
    Wall clock time for generating random WDFAs using Algorithm~\ref{alg: constant space}. Left: running time for the algorithm in case (1), i.e., streaming the resulting WDFAs to disk. Right: running time in case (2), i.e., storing WDFAs in internal memory. 
    }
    \label{fig:exp1}
\end{figure*}

Figure \ref{fig:exp1} shows the running time of both variants (left: (1) streaming to SSD; right: (2) streaming to RAM). Both versions exhibit a linear running time behavior, albeit with a different multiplicative constant. The algorithm storing the WDFA in internal memory is between 1.2 and 1.7 times faster than the version streaming the WDFA to the disk (the relatively small difference is due to the fact that we used an SSD). We measured a throughput of at least 5.466.897 and 7.525.794 edges per second for the two variants, respectively. 
In our experiments we never observed a rejection: this is due to the fact that $\sigma \ll m$, making it extremely likely to generate bit-matrices $O$ containing at least one set bit in each column. 

As far as space usage is concerned, version (1), i.e., streaming the WDFA to disk, always used about 4 MB of internal memory, independently from the input size (this memory is always required to load the \texttt{C++} libraries). 
This confirms the constant space usage of our algorithm, also experimentally.
As expected, the space usage of version (2) is linear with the input's size. Nevertheless, both algorithms are extremely fast in practice: in these experiments, the largest automaton consisting of 64 million states and more than 8 billion edges was generated in about 15 and 10 minutes with the first and second variant, respectively.

\newpage
\bibliography{references}









\end{document}